\documentclass[a4paper, 10pt]{scrartcl}

\setcapindent{1em}
\bibliographystyle{plainurl}%

\usepackage[table,dvipsnames]{xcolor}
\definecolor{UOSGray}{RGB}{201,208,209}

\usepackage{comment}
\usepackage[USenglish]{babel}
\usepackage[utf8]{inputenc}

\usepackage[left=2.0cm, right=2.0cm, top=2.5cm, bottom=2.5cm]{geometry}

\usepackage{csquotes}
\usepackage{mathtools}
\usepackage{amsmath}
\usepackage{amsfonts}
\usepackage{amssymb}
\usepackage{amsthm}
\usepackage{mathtools}
\usepackage{bbm}
\usepackage{booktabs} %
\usepackage{nicematrix}
\usepackage{xspace}
\usepackage{multicol}
\usepackage{relsize}
\usepackage{subcaption}
\usepackage{algorithm}
\usepackage[noend]{algpseudocode}
\usepackage{arydshln}
\usepackage{tabularx}
\usepackage{cite}
\usepackage{hyperref}
\usepackage{multirow}
\usepackage{accents}

\usepackage{cite}
\usepackage{cleveref}

\newtheorem{theorem}{Theorem}%
\newtheorem{lemma}[theorem]{Lemma}%
\newtheorem{corollary}[theorem]{Corollary}%
\newtheorem{observation}[theorem]{Observation}%
\newtheorem{definition}[theorem]{Definition}%
\newtheorem{example}[theorem]{Example}%

\newcommand{\opt}{\ensuremath{\text{OPT}}}

\newcommand{\poly}{\ensuremath{\text{poly}}}
\newcommand{\demand}{\ensuremath{\delta}}
\newcommand{\maxDemand}{\ensuremath{\bar{\delta}}}
\newcommand{\maxDeg}{\ensuremath{\Delta}}
\newcommand{\maxLength}{\ensuremath{\bar{\ell}}}
\newcommand{\maxWeight}{\ensuremath{\bar{w}}}
\newcommand{\myLabeling}{\ensuremath{\pi}}
\newcommand{\lpTime}{\ensuremath{\text{LP}}}
\newcommand{\maxLB}{\ensuremath{\hat{w}}} %

\newcommand{\numCuts}{\ensuremath{\mu}}
\newcommand{\maxNumCuts}{\ensuremath{\bar{\numCuts}}}
\newcommand{\extension}[1]{\ensuremath{#1^{\maxDemand}}}
\newcommand{\myRestriction}[1]{\ensuremath{#1[uv]}}
\newcommand{\weightRestriction}[2]{\ensuremath{#1[#2]}}
\newcommand{\myDeg}{\ensuremath{\text{deg}}}
\newcommand{\dist}[2]{\ensuremath{d_{#1}^{#2}}}
\newcommand{\distL}[1]{\ensuremath{d_{#1}^{\ell}}}

\newcommand{\highE}{\ensuremath{E^{>}}}
\newcommand{\lowE}{\ensuremath{E[\maxLB]}}
\newcommand{\lowG}{\ensuremath{\weightRestriction{G}{\maxLB}}}

\newcommand{\metK}{\ensuremath{K'}}
\newcommand{\wMST}{\ensuremath{w(\mathit{MST}(G,w))}}

\newcommand{\GrigNumVars}{\ensuremath{|K|^5m^8\maxLength^8}}
\newcommand{\GrigNumCons}{\ensuremath{|K|^3m^4\maxLength^4}}
\newcommand{\GrigEnc}{\ensuremath{\max\{\ell, w\}}}
\newcommand{\GrigNumTrees}{\ensuremath{n\log(m \maxWeight)}}
\newcommand{\GrigLP}{\ensuremath{\lpTime(\GrigNumVars, \GrigNumCons, \GrigEnc)}}

\newcommand{\myNumVars}{\ensuremath{|K| m \maxDemand}}
\newcommand{\myNumCons}{\ensuremath{|K| (m+n\maxDemand)}}
\newcommand{\myEnc}{\ensuremath{w}}
\newcommand{\myLP}{\ensuremath{\lpTime(\myNumVars, \myNumCons, \myEnc)}}
\newcommand{\myLPUB}{\ensuremath{\lpTime(|K| m^2 \maxLength, |K| m^2 \maxLength, \myEnc)}}

\newcommand{\myAGTime}{\ensuremath{|K| (m + n\log n)\log m}}
\newcommand{\mySGTime}{\ensuremath{|K| (m + n\log n)}}
\newcommand{\myRRratio}{\ensuremath{n \maxNumCuts |K|}}

\newcommand{\MCF}{\text{\normalfont(MCF)}\xspace}

\newcommand{\generic}{decoupled\xspace}
\newcommand{\flexible}{freeform\xspace}
\newcommand{\freeform}{\flexible} %
\newcommand{\uniform}{coupled\xspace}

\newcommand{\GR}{\textsc{Greedy}\xspace}
\newcommand{\AG}{\textsc{AugmentedGreedy}\xspace}

\newcommand{\RR}{\textsc{RandomizedRounding}\xspace}

\newcommand{\Prob}{\ensuremath{\mathrm{Pr}}}

\newcommand\mc[1]{\multicolumn{2}{c|}{#1}} %

\DeclareMathSymbol{\widetildesym}{\mathord}{largesymbols}{"65}
    \newcommand\lowerwidetildesym{%
      \text{\smash{\raisebox{-1.3ex}{%
        $\widetildesym$}}}}

\newcommand\parwidetilde[1]{%
   \mathchoice
      {\accentset{\displaystyle\scalebox{.3}{(}\lowerwidetildesym\scalebox{.3}{)}}{#1}}
         {\accentset{\textstyle\scalebox{.3}{(}\lowerwidetildesym\scalebox{.3}{)}}{#1}}
         {\accentset{\scriptstyle\scalebox{.3}{(}\lowerwidetildesym\scalebox{.3}{)}}{#1}}
         {\accentset{\scriptscriptstyle\scalebox{.3}{(}\lowerwidetildesym\scalebox{.3}{)}}{#1}}
   }

\newcommand{\paraBigOTilde}{\ensuremath{\parwidetilde{\mathcal{O}}}}

\newcommand{\newStuff}[1]{{#1}}

\usepackage{tikz}
\usetikzlibrary{arrows,shapes,backgrounds,calc,positioning,arrows.meta,patterns}
\definecolor{myLightGrey}{RGB}{151,158,159}
\definecolor{myVeryLightGrey}{RGB}{191,198,199}

\tikzstyle{vertex}=[
    circle,
    fill=white, %
    draw=black,    %
    line width=1pt, %
    minimum size=20pt,
    inner sep=0pt
]

\tikzstyle{greyVertex}=[
    circle,
    fill=myVeryLightGrey, %
    draw=black,    %
    line width=1pt, %
    minimum size=20pt,
    inner sep=0pt
]

\tikzstyle{edge} = [draw,thick,-]
\tikzstyle{arc} = [draw,thick,->]

\newcommand{\includeTikzGraph}[1]{%
    \csname #1\endcsname
}

\makeatletter

\makeatother

\title{Simple Approximations for\\ General Spanner Problems} %

\usepackage{orcidlink}

\author{Fritz Bökler \orcidlink{0000-0002-7950-6965} \\ \small fritz.boekler@uos.de \and Markus Chimani \orcidlink{0000-0002-4681-5550} \\\small markus.chimani@uos.de \and Henning Jasper\footnote{Corresponding author}~ \orcidlink{0000-0002-9821-8600}\\\small henning.jasper@uos.de}

\date{\small %
Institute of Computer Science, Osnabrück University, Germany}

\begin{document}

\maketitle

\begin{abstract}
Consider a (possibly directed) graph with $n$ nodes and $m$ edges,
independent edge weights and lengths, and arbitrary distance demands for node pairs.
The \emph{spanner problem} asks for a minimum-weight subgraph that satisfies these %
demands via sufficiently short paths w.r.t.\ the edge lengths.
For \emph{multiplicative $\alpha$-spanners} (where %
demands equal $\alpha$ times the original distances) and assuming that 
each edge's weight equals its length, 
the simple \GR heuristic by Althöfer et al.\ (1993) is known to yield strong solutions, both in theory and practice. %
To obtain guarantees in more general settings, recent approximations typically abandon this simplicity and practicality.
Still, so far, there is no known non-trivial approximation algorithm for the spanner problem in its most general form.

We provide two surprisingly simple approximations algorithms.
In general, our \AG achieves the first unconditional approximation ratio of $m$, which is non-trivial due to the independence of weights and lengths. %
Crucially, it maintains all size and weight guarantees \GR is known for, i.e., in the aforementioned multiplicative $\alpha$-spanner scenario and even for \emph{additive $+\beta$-spanners}.
Further, it 
generalizes some of these size guarantees to derive new weight guarantees. %

Our second approach, \RR, establishes a graph transformation that allows a simple rounding scheme over a standard multicommodity flow LP.
It yields an $\mathcal{O}(n \log n)$-approx\-imation, assuming integer lengths and polynomially bounded distance demands.
The only other known approximation guarantee in this general setting requires several complex subalgorithms and analyses, yet we match it up to a factor of $\mathcal{O}(n^{1/5-\varepsilon})$ using standard tools.
Further, on bounded-degree graphs, we yield the first $\mathcal{O}(\log n)$ approximation ratio for constant-bounded distance demands (beyond undirected multiplicative $2$-spanners in unit-length graphs).
\end{abstract}

\newpage

\section{Introduction}
Let $G=(V,E)$ be a graph with $n$ nodes and $m$ edges.
For $u,v \in V$, the distance $d_G^\ell(u, v)$ is the length of a shortest path from $u$ to $v$ in $G$, subject to %
lengths $\ell \colon E \rightarrow \mathbb{Q}_{>0}$.
A \emph{spanner} is a subgraph $H=(V,E')$, with $E'\subseteq E$, that preserves these distances up to some error; see Ahmed et al.~\cite{ahmed2020} for an overview. %
Spanner variations differ in their \emph{distance demand} function $\demand \colon K \rightarrow \mathbb{Q}_{>0}$ that maps \emph{terminal pairs} $K \subseteq V \times V$ to maximum allowed distances in $H$; $H$ is feasible if and only if $d_H^\ell(u,v) \leq  \demand(u,v)$ for all $(u,v) \in K$. Most generally, we have:
\begin{definition}[(Decoupled Freeform) Spanner Problem]\label{def:GSP}
    Given a connected, simple directed or undirected graph $G=(V,E)$, terminal pairs $K \subseteq V \times V$, non-negative weights $w \colon E \rightarrow \mathbb{Q}_{\geq 0}$, positive lengths $\ell \colon E \rightarrow \mathbb{Q}_{>0}$, and positive distance demands $\demand \colon K \rightarrow \mathbb{Q}_{>0}$.
    Find a feasible spanner $H=(V,E')$ of $G$ with minimum total weight $w(H)\coloneqq w(E')\coloneqq \sum_{e\in E'} w(e)$.
\end{definition}

Spanners in general graphs were first introduced %
to study communication networks~\cite{peleg1989, Ullman1989}.
They since found application in, e.g., 
network design with bounded distances~\cite{li1992,dodis1999,chimani2014network}, approximate distance oracles~\cite{peleg1999proximity, thorup2005}, graph drawing~\cite{wallinger2023, archambault2024}, or passenger assignment~\cite{heinrich2023}. 

Arguably, \Cref{def:GSP} is the broadest interpretation of the classic spanner problem.
In many applications, the weight and length of an edge are completely independent, we say \emph{\generic}.
However, the literature often considers simplifying assumptions, cf.\ \Cref{tab:ProblemNames}(left): 
In \emph{basic} (a.k.a.\ minimum~\cite{gomez2023} or unweighted~\cite{elkin2007}) spanner problems, all edges have unit weight and length, i.e., $w(e)=\ell(e) = 1$, for all $e \in E$.
In \emph{\uniform} (a.k.a.\ light-~\cite{chimani2014network} or minimum-weight~\cite{Sigurd2004}) spanner problems, each edge's weight equals its length, i.e., $ w(e) = \ell(e) > 0$, for all $e \in E$.
\emph{Unit-length} or \emph{unit-weight} spanner problems assume $\ell(e) = 1$, or $w(e) = 1$, for all $e\in E$, respectively.
Unit-weight variations thus minimize the \emph{size} $|E'|$ of spanners.

In general, every terminal pair $(u,v) \in K$ has an arbitrary distance demand $\demand(u,v)$. %
We call this \emph{\flexible} spanner problems.
However, many spanner variations assume a consistent global form, cf. \Cref{tab:ProblemNames}(right):
In \emph{linear $(\alpha, \beta)$-spanners}, for $\alpha \geq 1$ and $\beta \geq 0$  (typically small constants), the distance demands are $\demand(u,v)=\alpha \cdot d_G^\ell(u, v) + \beta$, for all $(u,v) \in K = V \times V$. 
Notable special cases include \emph{multiplicative $\alpha$-spanners} (or $(\alpha,0)$-spanners), \emph{additive $+\beta$-spanners} (or $(1, \beta)$-spanners) and \emph{distance preservers} (or $(1, 0)$-spanners).

\begin{table}[b]
    \centering
    \caption{Naming-conventions. (left) Restrictions on weights $w(e)$ and lengths $\ell(e)$, for $e\in E$. (right) Restrictions on distance demands $\demand(u,v)$, for $(u,v) \in K$.}
    \label{tab:ProblemNames}
        \phantom{.}\hfill
        \begin{tabular}{c|c|c}
         name & $w(e)$ & $\ell(e)$ \\
         \hline
         \generic & $\mathbb{Q}_{\geq0}$ &  $\mathbb{Q}_{>0}$  \\
         unit-length & $\mathbb{Q}_{\geq0}$ & $1$  \\
         unit-weight & $1$ & $\mathbb{Q}_{>0}$  \\
        \hline
         \uniform  & \multicolumn{2}{c}{$\mathbb{Q}_{>0} = \mathbb{Q}_{>0}$}\\
         basic & \multicolumn{2}{c}{$1 = 1$}  \\
        \end{tabular}
        \hfill\hfill
        \begin{tabular}{c|l}
            name & \multicolumn{1}{c}{$\demand(u,v)$}\\
            \hline
            \flexible spanner & \multicolumn{1}{c}{$\mathbb{Q}_{>0}$}\\
            linear $(\alpha, \beta)$-spanner & $\alpha \cdot d_G^\ell(u, v) + \beta$ \\
            multiplicative $\alpha$-spanner & $\alpha \cdot d_G^\ell(u, v)$\\
            additive $+\beta$-spanner & \phantom{$\alpha \cdot$} $d_G^\ell(u, v) + \beta$\\
            distance preserver & \phantom{$\alpha \cdot$} $d_G^\ell(u, v)$\\
        \end{tabular}
        \hfill\phantom{.} 
\end{table}

Unfortunately, already severely restricted variations are hard to solve, or at least approximate, in polynomial time (under mild complexity assumptions). %
The \emph{undirected basic multiplicative $\alpha$-spanner} problem is \textbf{NP}-hard~\cite{peleg1989}, even for degree bounded graphs~\cite{gomez2023}.
For $\alpha=2$, the best possible approximation ratio is $\Theta(\log n)$~\cite{kortsarz2001}; for $3 \leq \alpha \leq \log^{1-2\varepsilon} n$, it allows no ratio better than $2^{(\log^{1-\varepsilon} n)/\alpha}$, for any constant $\varepsilon>0$~\cite{dinitz2015}.
Similarly, for $\beta \geq 1$, the \emph{undirected basic additive $+\beta$-spanner} problem is \textbf{NP}-hard~\cite{liestman1993}, and allows no $2^{\log^{1-\varepsilon} n}/\beta^3$-approximation, for any constant $\varepsilon>0$~\cite{chlamtavc2020approximating}.
Further, the \emph{undirected \generic multiplicative $\alpha$-spanner} problem allows no  $2^{\log^{1-\varepsilon} n}$-approximation, for any $1 < \alpha \in \mathcal{O}(n^{1-\varepsilon'})$ and $\varepsilon, \varepsilon' \in (0,1)$%
, even for polynomially bounded weights and lengths~\cite{elkin2007}.
Clearly, the \generic \freeform\ spanner problem inherits all of these hardness results.

Still, many special cases have known approximation algorithms.
Most research focuses on multiplicative $\alpha$-spanners.
For basic $\alpha$-spanners with $\alpha = 2$~\cite{kortsarz1994} and $\alpha \in \{3,4\}$~\cite{berman2013,zhang2016}, there are $\mathcal{O}(\log(m/n))$- and $\widetilde{\mathcal{O}}(n^{1/3})$-approximations, respectively.
Thereby, $\widetilde{\mathcal{O}}$ suppresses polylogarithmic factors.
For unit-length $\alpha$-spanners, there are $\mathcal{O}(\log n)$-, $\widetilde{\mathcal{O}}(n^{1/2})$-, and $\mathcal{O}(n \log \alpha)$-approximations for $\alpha=2$~\cite{kortsarz2001}, $\alpha=3$~\cite{dinitz2011directed}, and arbitrary integer $\alpha$~\cite{dodis1999}, respectively.

Regarding non-multiplicative spanners, notable algorithms yield basic additive $+2$-, $+4$-, and $+6$-spanners with size in $\mathcal{O}(n^{3/2})$~\cite{knudsen2014, aingworth1999fast}, $\widetilde{\mathcal{O}}(n^{7/5})$~\cite{chechik2013new}, and $\mathcal{O}(n^{4/3})$~\cite{baswana2010additive, knudsen2014}, respectively.
Some of these guarantees hold in the unit-weight setting for relaxed additive errors
~\cite{elkin2019almost, ahmed2020weighted, ahmed2021additive, elkin2023improved}.
Other constructions focus on basic~\cite{elkin2004_1+epsilon, baswana2005new, elkin2005computing} or \uniform~\cite{ahmed2021additive, gitlitz2024lightweight} linear $(\alpha, \beta)$-spanners.

Many spanner algorithms form a union over several well-chosen paths between terminal pairs.
Arguably, the most straightforward approach is the \GR heuristic, 
which incrementally adds shortest paths to an initially empty spanner until it is feasible.
Famously, Althöfer et al.~\cite{althofer1993} apply this technique 
to the \emph{undirected \uniform multiplicative $\alpha$-spanner} problem, 
yielding spanners of lightness $\mathcal{O}(n/k)$ and size $\mathcal{O}(n^{1+1/k})$, for all $\alpha = 2k-1$ ($k \in \mathbb{N}_{\geq 1}$).
The \emph{lightness} $\frac{w(E')}{\wMST}$ %
upper bounds the approximation ratio, as it puts the spanner's weight in relation to the weight of a minimum-weight spanning tree $\mathit{MST}(G,w)$ of $G$ (a simple lower bound to the optimum).
By violating distance demands by a factor of $(1+\varepsilon)$, $\varepsilon>0$, \GR yields $(1+\varepsilon) (2k-1)$-spanners of lightness $\mathcal{O}(n^{1/k} (1+k/(\varepsilon^{1+1/k} \log k)))$~\cite{elkin2015light}.
In practice, \GR typically outperforms other non-exact approaches~\cite{chimani2022}, often achieving near-optimum weight~\cite{jasper2024}.
By Knudsen~\cite{knudsen2014}, \GR also yields basic additive $+2$-spanners of size $\mathcal{O}(n^{3/2})$. %

To achieve approximation guarantees for broader problem variations, \GR does not suffice.
Corresponding algorithms, e.g., ~\cite{dinitz2011directed,chlamtavc2020approximating,chimani2014network,grigorescu2023approximation}, select paths more intricately, sacrificing simplicity and speed.
Further, they are restricted to integer lengths%
\footnote{Newer publications say so explicitly; older ones may gloss over this restriction, arising from using an FPTAS to solve constrained shortest path (CSP) problems over integer lengths. Indeed, CSP with rational lengths is strongly \textbf{NP}-hard~\cite{wojtczak2018}.}.
The algorithms are randomized both w.r.t.\ approximation ratio and feasibility.
They tackle the LP relaxation of the \emph{Path-ILP}~\cite{Sigurd2004} for spanners as a subroutine.
Already solving the LP is notoriously difficult, as variables must be dynamically generated by repeatedly solving (\textbf{NP}-hard) \emph{constrained shortest path} problems~\cite{Garey1979}.
The latter is achieved by using an FPTAS~\cite{lorenz2001simple,horvath2018multi}, yielding a $(1+\varepsilon)$-approximation for the relaxation.
The spanner approximations then cover some of the terminal pairs by rounding the approximate fractional LP solution.
Covering the remaining terminal pairs using shortest path trees, \emph{junction trees}~\cite{chekuri2010approximation}, or \emph{shallow-light Steiner trees}~\cite{Garey1979} yields randomized approximations for directed unit-weight multiplicative $\alpha$-spanners with ratio $\widetilde{\mathcal{O}}(n^{2/3})$~\cite{dinitz2011directed}, basic \freeform\ spanners with ratio $\mathcal{O}(n^{3/5+\varepsilon})$~\cite{chlamtavc2020approximating}, or \generic multiplicative $(\alpha^2 + \alpha\varepsilon)$-spanners with at most $\mathcal{O}(n^{1/2+\varepsilon})$ times the weight of an optimum \generic $\alpha$-spanner~\cite{chimani2014network}, for $\varepsilon>0$, respectively.
Grigorescu, Kumar and Lin~\cite{grigorescu2023approximation} combine several of these ideas:
For directed \generic distance preservers with integer lengths and polynomially bounded distance demands, they present a randomized $\mathcal{O}(n^{1/2 + \varepsilon})$-approximation 
(albeit implicitly, the same guarantee was achieved before in~\cite{chimani2014network}).
Most importantly, they yield the first randomized $\widetilde{\mathcal{O}}(n^{4/5 + \varepsilon})$-approximation for the directed \generic \flexible spanner problem with integer lengths and polynomially bounded distance demands.
To attain a sublinear ratio for a problem of this generality, the algorithm employs several further complex and practically never-implemented subroutines.
It iterates over guesses for the optimum solution value; 
every iteration not only requires numerous approximations of Path-LP adaptations, but also several constrained shortest path trees~\cite{lorenz2001simple}, and approximations of \emph{minimum-weight distance-preserving junction trees}~\cite{chlamtavc2020approximating} via \emph{shallow trees}~\cite{garg2000polylogarithmic}.
The running time is not provided in $\mathcal{O}$-notation, but only argued to be generally polynomial; the expected weight guarantee is given in $\widetilde{\mathcal{O}}$-notation, neglecting polylogarithmic factors.

Currently, there are no practical approximations for \generic, let alone \generic \freeform, spanner problems.
New spanner approximations tend to focus on improving weight guarantees in $\paraBigOTilde$-notation rather than considering running times or practicality.
Yet, despite these efforts, an approximation for general \generic \freeform spanner problems that does not require polynomially bounded integer lengths has remained elusive so far.
Consequently, practitioners are prompted to force \generic instances into a \uniform format by discarding or combining weights and lengths (e.g.~\cite{archambault2024, heinrich2023}).
Instances are then often solved with \GR, which is easy to implement and understood to yield satisfactory results~\cite{chimani2022, jasper2024}.
Still, this practice poses significant issues, as it does not guarantee that solutions effectively address the original \generic problem.

\subparagraph{Contribution.} 
We present two simple approximation algorithms for both directed and undirected \generic \freeform\ spanner problems. 
Tables \ref{tab:myResults} and \ref{tab:AGResults} summarize our contributions.

In \Cref{sec:2PhaseGreedy}, we augment \GR to become the first unconditional (i.e., no upper bounds or domain restrictions for weights or lengths, etc.) $m$-approximation for all \generic \freeform\ spanner problems. The ratio $m$ is non-trivial due to the independence of weights and lengths.
Our \AG not only maintains \GR's running time up to a factor of $\mathcal{O}(\log m)$, but crucially retains all size and weight guarantees typically associated with standard \GR.
Further, it allows us to lift known size guarantees from several \uniform settings into \generic weight guarantees.

The best known approximation algorithm~\cite{grigorescu2023approximation} for (un)directed \generic \freeform\ spanners assumes integer lengths and polynomially bounded maximum distance demand $\maxDemand$.
It achieves ratio $\widetilde{\mathcal{O}}(n^{4/5 + \varepsilon})$, but relies on several complex subalgorithms with complicated analyses.
In \Cref{sec:RandomizedRounding}, for these instances, we provide a simple graph transformation enabling a \generic \freeform\ spanner approximation via randomized rounding of a standard multicommodity flow LP.
Let $\maxNumCuts=(\maxDemand + 2)^{n-2}$; our ratio of $\ln(\myRRratio)$ $(\in \mathcal{O}(n \log n))$ accounts for all constants and matches the best known ratio up to a factor of $\mathcal{O}(n^{1/5-\varepsilon})$.
Our approximation requires no specialized subroutines, and offers significantly stronger time guarantees as well as easy implementation and analysis.
Also, this simple analysis allows us to prove the first $\mathcal{O}(\log n)$-approximation ratio for (un)directed \generic \freeform\ spanner problems with constant-bounded distance demands and (out)degree (which are \textbf{NP}-hard~\cite{gomez2023}).
Previously, for constant-bounded demands 
only the special case of undirected unit-length multiplicative $2$-spanners 
(i.e., $\demand(u,v) = 2$ for all $\{u,v\} \in K=E$) 
was known to allow this strong ratio. %

\begin{table}
    \caption{Guarantees for the (un)directed decoupled freeform spanner problem. %
    $\lpTime(x,y,f)$ denotes the running time for solving LPs with $\Theta(x)$ variables and $\Theta(y)$ constraints, whose largest  numbers arise in the image of function $f$ (cf.\ \Cref{sec:preliminaries}). \enquote{---} marks inapplicability due to exponential running times. \enquote{$^{\$}$} denotes the solution is feasible w.h.p and requires integer lengths. Let $\maxWeight, \maxLength,  \maxDemand, \maxDeg$ denote the maximum weight, length, distance demand, and (out)degree of an instance, respectively. Let $\varepsilon >0$, and $\maxNumCuts=(\maxDemand + 2)^{n-2}$.
    }
    \label{tab:myResults}
    \resizebox{\textwidth}{!}{
    \begin{tabular}[c]{c|c|c| r @{${\ } \mathcal{O} {}$} l |c}
    \multicolumn{2}{c|}{}& best known  & \multicolumn{3}{c}{our contribution} \\
    \multicolumn{2}{c|}{} & (Grigorescu et al.~\cite{grigorescu2023approximation}) &  \mc{\RR} & \AG \\
     \hline
     \multicolumn{2}{c|}{} & $\poly(n, |K|, \log \maxWeight, \maxLength, \varepsilon^{-1})$, & \mc{} & \\
     \multicolumn{2}{c|}{running time (in $\mathcal{O}$)} & $\Omega \big( \GrigNumTrees \cdot$ & \mc{$\myLP$} & $\myAGTime$\\
     \multicolumn{2}{c|}{} & $\GrigLP \big)$ & \mc{}& \\

     \hline
      &  general & --- & \mc{---} & $m$  \\
     approx. & $\maxDemand \in \mathcal{O}(\text{poly(n)})$ & $\widetilde{\mathcal{O}}(n^{4/5 + \varepsilon})$ $^{\$}$  & $\ln(\myRRratio) \in$ & $(n \log n)$ $^{\$}$ &  $m$ \\
     ratio & $\maxDemand \in \mathcal{O}(1)$ &  $\widetilde{\mathcal{O}}(n^{4/5 + \varepsilon})$ $^{\$}$ &  & $(n)$ $^{\$}$ &  $m$ \\
     & $\maxDemand, \maxDeg \in \mathcal{O}(1)$ & $\widetilde{\mathcal{O}}(n^{4/5 + \varepsilon})$ $^{\$}$ &  &$(\log n)$ $^{\$}$ &  $m \in \mathcal{O}(n)$ \\
    \hline

    \end{tabular}
    }
\end{table}

\begin{table}
    \caption{Overview of \GR vs.\ \AG.
    For \uniform multiplicative and basic additive spanners, they obtain the same weight and size guarantees (\Cref{cor:2PhaseGreedyRetention,cor:AdaptedGreedy+beta}).
    Let $k \in \mathbb{N}_{\geq1}$.
    \texttt{GEO} denotes the geometric setting;
    $\highE \subseteq E$ denotes \enquote{high-weight} edges; see \Cref{sec:2PhaseGreedy} for details.
    }
    \label{tab:AGResults}
    \resizebox{\textwidth}{!}{
    \centering
    \begin{tabular}{c|c|c |c|c}
    spanner problem  & setting  &  
    \GR & \multicolumn{2}{c}{\AG} \\
    \cline{3-5}
    & & \multicolumn{2}{c|}{weight ratio guarantee} & size\\
    \hline
    \hline
    (un)directed \generic \freeform & --- & --- & $m$ & $m$ (trivial)\\
    \hline
     \multirow{2}{*}{undirected \generic multiplicative} & $\alpha=2k-1$, $|\highE| \in \mathcal{O}(\sqrt{n} \cdot n^{1/k})$& --- & $\mathcal{O}(n^{1+1/k})$ & \multirow{3}{*}{\shortstack[c]{\ \\same as\\ \GR}} \\
      &  \texttt{GEO}, $\alpha>1$, $|\highE| \in \mathcal{O}(\sqrt{n})$ & --- & $\mathcal{O}(n)$ & \\
     \cline{1-4}
     undirected unit-length additive & $2 \leq \beta \in \mathcal{O}(1)$ & --- & $\mathcal{O}(n^{3/2})$ & \\
    \end{tabular}
    }
\end{table}

\section{Preliminaries}\label{sec:preliminaries}
Generally, \emph{shortest} means minimum length. 
In undirected settings, unordered terminal pairs $K \subseteq \binom{V}{2}$ suffice.
With $\opt$ we denote the optimum solution value of the considered instance.
We define the maximum length $\maxLength \coloneqq \max_{e\in E} \ell(e)$, weight $\maxWeight \coloneqq \max_{e\in E} w(e)$, distance demand $\maxDemand \coloneqq \max_{(u,v) \in K} \demand(u,v)$ and %
(out)degree $\maxDeg\coloneqq \max_{q \in V} |\{(q,t) \mid (q,t) \in E\} |$.
We can assume w.l.o.g.\ $\maxLength \leq \maxDemand \leq n \maxLength$, as longer edges cannot be used to satisfy any distance demands and larger demands are trivially satisfied in connected graphs.
For a function~$f\colon E \rightarrow \mathbb{Q}$, we use the shorthands $f(u,v) \coloneqq f((u,v))$ for $(u,v) \in E$, $f(X) \coloneqq\sum_{e \in X} f(e)$ for $X \subseteq E$, and $f(G')\coloneqq f(E')$ for $G'=(V',E')\subseteq G$.
Let $\lpTime(x, y, f)$ denote the running time to solve an LP with $\Theta(x)$ variables and $\Theta(y)$ constraints;
thereby the maximum encoding-length of any coefficient does not exceed that of the largest rational in the image of function $f$.

Enforcing distance demands $\demand(u,v) = \alpha \cdot d_G^\ell(u,v)$, for all \emph{adjacent} node pairs $(u,v) \in E=K$ guarantees feasible multiplicative $\alpha$-spanners~\cite{peleg1989}.
Further, it is folklore that non-metric edges can be removed from \emph{\uniform} $\alpha$-spanner instances, as they cannot be part of an optimum solution.
However, in \generic settings, non-metric edges \emph{cannot} be removed:
Consider an undirected triangle, where two of the edges have unit weight and length, while the third has weight $0.5$ and (non-metric) length $3$.
The optimum $4$-spanner has weight $1.5$, 
while removing the third edge leaves us with a minimal solution value of $2$.
Yet, the following Observation %
allows us to disregard non-metric distance demands in \generic \freeform\ spanner problems.

In the \emph{demand graph} $D=(V,K)$, 
two nodes are adjacent if there is a distance demand between them. Then $d_D^\demand(u,v)$ is the distance between nodes $u$ and $v$ in $D$ w.r.t.\ the distance demands $\demand$.
We define the \emph{metric} terminal pairs as those whose distance demands (strictly) satisfy the triangle inequality, i.e., $\metK\coloneq \{ (u,v) \in K \mid d_{D-(u,v)}^\demand(u,v) > \demand(u,v)\}$. %
\begin{observation}\label{lem:metricTerminalPairs}
        A subgraph $H=(V,E')$ is a feasible (un)directed \generic \flexible spanner of $G=(V,E)$, if and only if $d_H^\ell(u,v) \leq \demand(u,v)$ %
        for all metric terminal pairs $(u,v)$. %
\end{observation}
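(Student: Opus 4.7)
The forward direction is immediate, since $\metK \subseteq K$ by definition, so any feasible spanner trivially satisfies the demand bound on every metric terminal pair. All the work lies in the backward direction, and I plan to attack it by strong induction on the value $\demand(u,v)$, ranging only over the finite image $\{\demand(u,v) \mid (u,v) \in K\}$.

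The preparatory observation is the following unpacking of the definition: if $(u,v) \in K \setminus \metK$, then $d_{D-(u,v)}^\demand(u,v) \leq \demand(u,v)$, witnessed by some $u$--$v$ path $u = x_0, x_1, \ldots, x_k = v$ in the demand graph $D - (u,v)$, using only edges $(x_i,x_{i+1}) \in K$ and satisfying $\sum_{i=0}^{k-1}\demand(x_i,x_{i+1}) \leq \demand(u,v)$. Because $K$ is a set (hence $D$ has no parallel edges) and $(u,v)$ itself is removed, such a path necessarily has $k \geq 2$. Since all demands lie in $\mathbb{Q}_{>0}$, this forces $\demand(x_i,x_{i+1}) < \demand(u,v)$ for every $i$, which is exactly what legitimizes induction on the demand value.

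The base case then takes care of itself: for any terminal pair whose demand is minimal among $\{\demand(u,v) \mid (u,v) \in K\}$, the preceding observation rules out non-metricity, so such a pair is metric and the bound $d_H^\ell(u,v) \leq \demand(u,v)$ holds by assumption. For the inductive step, assume the bound for all terminal pairs of strictly smaller demand and consider a non-metric $(u,v)$. Applying the inductive hypothesis along the chain gives $d_H^\ell(x_i,x_{i+1}) \leq \demand(x_i,x_{i+1})$ for every $i$; concatenating the corresponding shortest $H$-paths (directed or undirected, as appropriate) and invoking the triangle inequality for $d_H^\ell$ yields
\[
d_H^\ell(u,v) \;\leq\; \sum_{i=0}^{k-1} d_H^\ell(x_i,x_{i+1}) \;\leq\; \sum_{i=0}^{k-1} \demand(x_i,x_{i+1}) \;\leq\; \demand(u,v),
\]
closing the induction.

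The step I expect to be most delicate is not the induction itself but the little combinatorial argument that guarantees $k \geq 2$ and hence strict decrease of the demand along the chain; without it, the induction could stall on a non-metric pair whose chain contains an edge of equal demand. The directed and undirected cases are handled uniformly, since only the triangle inequality for shortest paths is used, which is valid in both settings.
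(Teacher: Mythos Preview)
Your proof is correct and rests on the same core idea as the paper's: decompose a non-metric terminal pair along a path in the demand graph and apply the triangle inequality for $d_H^\ell$. The paper's version is terser---it directly takes a shortest $st$-path $P$ in $D$ and asserts ``w.l.o.g.\ $P \subseteq \metK$'', then applies the metric-pair hypothesis to each edge of $P$ at once. Your explicit strong induction on the value $\demand(u,v)$ is precisely what makes that ``w.l.o.g.'' rigorous: the strict decrease $\demand(x_i,x_{i+1}) < \demand(u,v)$, which you correctly derive from $k \geq 2$ (forced by simplicity of $D$ and the removal of $(u,v)$) together with positivity of $\demand$, guarantees that the recursive replacement of non-metric edges terminates. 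So your argument is essentially an unrolling of the paper's, trading brevity for a self-contained justification of the one step the paper leaves implicit.
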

\begin{proof}
        Necessity follows from $\metK \subseteq K$. 
        Conversely, assume $d_H^\ell(u,v) \leq \demand(u,v)$ for all $(u,v) \in \metK$.
        Let $(s,t) \in K \setminus \metK$. Let $P \subseteq K$ be a minimum $st$-path in $D$ w.r.t.\ $\demand$;
        w.l.o.g.\ $P \subseteq \metK$. Consequently, $d_H^\ell(s,t) \leq \sum_{(i,j) \in P} d_H^\ell(i,j) \leq \sum_{(i,j) \in P} \demand(i,j) \leq \demand(s,t)$.%
\end{proof}

\subparagraph{Dodis and Khanna's Algorithm.}
Going from unit-length to general \generic settings is non-trivial.
In~\cite{dodis1999}, Dodis and Khanna give a $\mathcal{O}(n \log \alpha)$-approximation for directed unit-length multiplicative $\alpha$-spanners.
They use a multicommodity flow LP, that, for every edge $e \in E$, sends flow through a layered graph.
Randomized rounding yields a subgraph $H$, such that 
(w.h.p.)\ $\distL{H}(u,v) \leq \alpha \cdot \distL{G}(u,v)$, for every $G$-adjacent node pair $(u,v) \in E$.

At the end of the paper, they briefly suggest how to generalize this algorithm to the \generic setting with polynomially bounded lengths.
However, their approach unfortunately does not work:
In short, they transform the given instance into a unit-length instance where demands are only given for some node pairs (in particular not all adjacent node pairs) and want to solve it using the previous algorithm.
However, that algorithm requires that distance demands are enforced for \emph{all adjacent} node pairs. Since this is not the case here, it cannot be expected to yield feasible solutions in general. See \Cref{appendix:Dodis} for an explicit example.

Nonetheless, our \RR algorithm in~\Cref{sec:RandomizedRounding} will leverage their idea of multicommodity flow in layered graphs to, in fact, yield approximations even for more general \generic \freeform (not only multiplicative) spanner problems.

\section{Augmented Greedy}\label{sec:2PhaseGreedy}

\begin{algorithm}[tb]
\caption{\GR.}\label{alg:StandardGreedy}
\begin{algorithmic}[1]
\Require graph $G=(V,E)$, terminal pairs $K$, lengths $\ell$, distance demands $\demand$
\State $H$ $\leftarrow$ $(V,\emptyset)$
\For{$(u,v) \in K$ in non-decreasing order of distance $d_G^\ell(u,v)$}
    \If{$d^\ell_H(u,v) > \demand(u,v)$}
        \State add shortest $uv$-path in $G$ (w.r.t.\ lengths $\ell$) to $H$ %
    \EndIf
\EndFor
\State \textbf{return} $H$
\end{algorithmic}
\end{algorithm}

Arguably, the easiest spanner algorithm is the \GR heuristic, shown in \Cref{alg:StandardGreedy}, which disregards weights: Start with an empty spanner $H=(V,\emptyset)$. %
Until $H$ is feasible, process terminal pairs $(u,v) \in K$ in non-decreasing order of distance in $G$.
If the current distance in $H$ is too long, add a shortest $uv$-path in $G$ to $H$.
Thereby, we break ties consistently to ensure unique shortest paths, e.g., via well-chosen perturbation or lexicographic ordering.
These shortest path computations for all $(u,v) \in K$ dominate the running time of $\mathcal{O}(\mySGTime)$,
where some cases allow refined analyses~\cite{althofer1993, chandra1992}.
Despite its simplicity, \GR offers strong guarantees for certain spanner problems.
For undirected \uniform multiplicative $\alpha$-spanners, it guarantees lightness $\mathcal{O}(n/k)$ and size $\mathcal{O}(n^{1+1/k})$~\cite{althofer1993}, for $\alpha = 2k-1$ ($k \in \mathbb{N}_{\geq 1}$).
The latter is tight assuming Erdős' girth conjecture~\cite{erdos1964}.
For $\alpha \geq \log n$ and $\alpha \geq (\log n)^2$ the lightness improves to $\mathcal{O}(\log n)$ and $\mathcal{O}(1)$, respectively~\cite{chandra1992}.
In the undirected \uniform \emph{geometric} setting, i.e., complete graphs with Euclidean  lengths, it guarantees lightness $\mathcal{O}(\log n)$ and size $\mathcal{O}(n)$, for all $\alpha>1$~\cite{chandra1992} (observe that this changes the co-domains of $w,\ell,\demand$ to $\mathbb{R}_{>0}$).
For spanner problems, where feasible solutions are undirected connected subgraphs that span all nodes, 
the weight of a minimum-weight spanning tree $\wMST$
lower bounds the weight of \emph{all} feasible spanners.
Hence, as guarantees~\cite{althofer1993, chandra1992} are absolute or w.r.t.\ $\wMST$, they analogously hold for undirected \uniform linear $(\alpha, \beta)$-spanners, for any $\beta \geq 0$.
Further, for undirected basic additive $+2$-spanners, \GR has (best possible~\cite{woodruff2006}) size guarantee $\mathcal{O}(n^{3/2})$~\cite{knudsen2014}.
However, in general, \GR has no known approximation guarantees for spanner problems with \generic weights and lengths or if distance demands do not take linear form.

Our \AG overcomes these limitations, see \Cref{alg:AdaptedGreedy}.
It runs in two consecutive phases:
Phase 1 achieves the first unconditional approximation ratio for all \freeform\ spanner problems, including \generic weights and lengths.
For this, it establishes a feasible spanner $G'$ that already attains an initial weight guarantee. %
Phase 2 uses \GR to further sparsify $G'$ to obtain the final spanner $H$.
While without phase 1 \GR does not achieve any weight guarantee in general, executing \GR after phase 1 ensures it still finds a feasible solution, maintains the weight ratio of the first phase, and achieves \GR's known guarantees at the same time. Crucially, \AG's total running time closely matches that of \GR.

\begin{algorithm}[tb]
\caption{\AG. Line 2 is only executed if the problem definition provides $\wMST \leq \opt$ (see text).}\label{alg:AdaptedGreedy}
\begin{algorithmic}[1]
\Require graph $G=(V,E)$, terminal pairs $K$, weights $w$, lengths $\ell$, distance demands $\demand$
\State binary search for $\maxLB = \min \{ w'\in\mathcal{W}: \weightRestriction{G}{w'} \text{ is a feasible spanner} \}$ \hfill $//$ phase 1
\State \textsf{\textbf{[}} $\maxLB$ $\leftarrow$ $\max \{\maxLB, \wMST\}$  \textsf{\textbf{]}}
\State construct $\lowG$ 
\State \textbf{return} {\GR}$(\lowG, K, \ell, \demand)$ \hfill $//$ phase 2
\end{algorithmic}
\end{algorithm}

The key idea in phase 1 is to determine an adequate lower bound $\maxLB$ to the optimum solution value $\opt$.
Let $\mathcal{W}$ be the set of distinct edge weights in the instance. For any $w'\in \mathcal{W}$, we define weight-restricted subgraphs $\weightRestriction{G}{w'}=(V,\weightRestriction{E}{w'})$ of $G$ with $\weightRestriction{E}{w'} \coloneqq \{e \in E\colon w(e) \leq w'\}$.
Checking whether some $\weightRestriction{G}{w'}$ constitutes a feasible spanner, i.e.,  $d^\ell_{\weightRestriction{G}{w'}}(u,v) \leq \demand(u,v)$ for all $(u,v) \in K$, is trivial.
Thus, using binary search over sorted $\mathcal{W}$, we can easily find the smallest weight $\maxLB$ such that $\weightRestriction{G}{\maxLB}$ is a feasible spanner.
Clearly, every feasible spanner contains an edge of weight at least~$\maxLB$ and $\maxLB \leq \opt$.
Further, if the problem definition ensures that a spanner is undirected and must be spanning and connected
(e.g., for undirected multiplicative $\alpha$-spanner problems), the weight of a minimum-weight spanning tree is a natural lower bound for $\opt$.
In such a case, we may further increase $\maxLB$ to that weight, overall knowing that any feasible solution will require a total weight of at least $\maxLB$. %
Phase 2 further sparsifies $\lowG$.
For all $(u,v) \in K$, to ensure $d^\ell_{H}(u,v) \leq \demand(u,v)$ in the final spanner $H$, distances in $\weightRestriction{G}{\maxLB}$ generally must be preserved exactly.
This gives us a lower bound on the best achievable size guarantee of $\Omega(n^{2/3} |K|^{2/3})$, i.e., $\Omega(n^2)$ or $\Omega(m)$ for $|K| \in \Omega(n^2)$~\cite{coppersmith2006}.
Hence, there is no way to guarantee further sparsification of $\weightRestriction{G}{\maxLB}$.
Yet, by calling \GR on this restricted graph $\weightRestriction{G}{\maxLB}$, but with unaltered terminal pairs $K$, lengths $\ell$, and distance demands $\demand$, we have a simple way to generate a feasible spanner that does obtain stronger weight and size guarantees for several interesting instance classes.

\begin{theorem}\label{thm:2PhaseGreedy}
    \AG is an %
    $m$-approximation for all (un)directed \generic \freeform\ spanner problems.
\end{theorem}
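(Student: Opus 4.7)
The plan is to show (i) that the algorithm terminates with a feasible spanner, and (ii) that its weight is at most $m \cdot \opt$, by sandwiching both the returned weight and $\opt$ against the threshold $\maxLB$ computed in phase 1.

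First, I would justify that phase 1 is well-defined: since $G$ itself is a feasible spanner whenever the instance admits any solution, the set $\{w' \in \mathcal{W}: \weightRestriction{G}{w'}\text{ is feasible}\}$ is nonempty, so the binary search finds a minimum $\maxLB$. Next, I would establish the lower bound $\maxLB \leq \opt$. Consider any optimum spanner $H^*$. If all edges of $H^*$ had weight strictly less than $\maxLB$, then $H^* \subseteq \weightRestriction{G}{w'}$ for the largest $w' \in \mathcal{W}$ with $w' < \maxLB$. Since adding edges can only shorten distances, $\weightRestriction{G}{w'}$ would itself be a feasible spanner, contradicting the minimality of $\maxLB$. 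Hence $H^*$ contains at least one edge of weight $\geq \maxLB$, so $\opt \geq \maxLB$. If line 2 is executed, the problem definition guarantees $\wMST \leq \opt$, so the bound $\maxLB \leq \opt$ still holds.

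For feasibility of the returned $H$, I would note that $\lowG$ is feasible by the choice of $\maxLB$. Running \GR on $\lowG$ processes each terminal pair $(u,v) \in K$ and, whenever $d_H^\ell(u,v) > \demand(u,v)$, appends a shortest $uv$-path in $\lowG$ to $H$. After such an insertion, $d_H^\ell(u,v) \leq d_{\lowG}^\ell(u,v) \leq \demand(u,v)$, using feasibility of $\lowG$. Thus all demands are met upon termination.

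Finally, for the weight bound, every edge added during phase 2 comes from $\lowG$ and therefore has weight at most $\maxLB$. Since $H$ contains at most all $m$ edges of $G$, we get
\[
w(H) \;\leq\; m \cdot \maxLB \;\leq\; m \cdot \opt,
\]
proving the $m$-approximation guarantee. The only nontrivial step is the lower bound argument showing $\maxLB \leq \opt$; the rest is essentially bookkeeping about subgraph inclusion and the behavior of \GR on a feasible input graph.
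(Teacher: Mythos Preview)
Your proof is correct and follows essentially the same approach as the paper: establish $\maxLB \leq \opt$, argue feasibility of the output via feasibility of $\lowG$, and bound $w(H) \leq m \cdot \maxLB \leq m \cdot \opt$. The paper additionally includes a short running-time analysis ($\mathcal{O}(\myAGTime)$) to justify that \AG is a polynomial-time approximation, which you omit but which is straightforward.
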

\begin{proof}
    $\lowG$ is a feasible spanner by construction.
    Thus, for all $(u,v) \in K$, the shortest $uv$-path in $\lowG$ has length at most $\demand(u,v)$ and
    \Cref{alg:AdaptedGreedy} returns a feasible spanner.
    The weight guarantee follows from $w(H)\leq w(E[\maxLB]) \leq |E[\maxLB]| \cdot \maxLB \leq m \cdot \opt$.

    The running time is dominated by $|K|\log m$ and $2|K|$ many shortest path computations in phase 1 and 2, respectively.
    Thus, it is in $\mathcal{O}(\myAGTime)$.
\end{proof}

\Cref{thm:2PhaseGreedy} is the first approximation algorithm for the \generic \freeform spanner problem that does not require polynomially bounded integer lengths and, e.g., is applicable to instances with Euclidean length.
Further, it works for both directed and undirected graphs.
As previously mentioned, using known bounds for distance preservers, i.e., Theorem 1.1(2) in~\cite{coppersmith2006}, we can construct instances for which our analysis of ratio $m$ is tight.
The running time matches \GR up to a factor of $\mathcal{O}(\log m)$.
Note that sparsifiying $G$ prior to phase 1 generally leads to the loss of any weight guarantee.

In many applications, edge weight and length are strongly anti-correlated; e.g., in network design, low-delay connections are typically more expensive. Here, \GR tends to struggle, as it hastily selects all of the shortest, and thus most expensive, connections. Hence, it cannot yield any weight guarantees.
Conversely, \AG and its analysis can benefit from such scenarios:
Especially if distance demands are somewhat similar to the original distances,
it is in practice often reasonable to assume that one has to select at least one short but comparably expensive edge into the spanner. This yields a large $\maxLB$, and thus there may be only few \emph{high-weight} edges $\highE\coloneqq E \setminus \lowE$.
In such settings, we can use \GR's absolute size guarantees to deduce stronger weight guarantees. 
Recall that for \generic multiplicative $\alpha$-spanners, by \Cref{lem:metricTerminalPairs} and~\cite{peleg1989}, we only have to enforce distance demands for all metric edges $M \subseteq E$.

\begin{lemma}\label{lem:AGSG}
    Consider the undirected \generic multiplicative $\alpha$-spanner problem. 
    \AG (for $\alpha=2k-1$, $k \in \mathbb{N}_{\geq 1}$) yields a spanner of size and of weight ratio $\mathcal{O}(n^{1+1/k} + \sqrt{n}|\highE|)$. In the geometric setting (for $\alpha>1$), this improves to $\mathcal{O}(n+\sqrt{n}|\highE|)$.
\end{lemma}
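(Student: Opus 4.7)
The plan is to exploit that every edge in the phase-2 output $H \subseteq \lowG$ has weight at most $\maxLB \leq \opt$. Hence, if we can bound $|E(H)| = \mathcal{O}(n^{1+1/k} + \sqrt{n}|\highE|)$, then
\[
w(H) \,\leq\, |E(H)|\cdot \maxLB \,\leq\, \mathcal{O}(n^{1+1/k} + \sqrt{n}|\highE|)\cdot\opt,
\]
so both the size and the weight-ratio claim follow at once. The whole proof thus reduces to controlling $|E(H)|$.

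For this, I would first apply \Cref{lem:metricTerminalPairs} to restrict attention to the metric edges $M \subseteq K = E$, split $M = (M\cap\lowE) \cup (M\cap\highE)$ (disjointly), and decompose $H = H_1 \cup H_2$ according to which partition triggered each edge's addition during phase 2. For $H_1$: for any $(u,v)\in M\cap\lowE$ the direct edge $(u,v)$ lies in $\lowG$ and, since $(u,v)$ is metric, $d_G^\ell(u,v) = \ell(u,v) = d_{\lowG}^\ell(u,v)$; hence the demand $\alpha\cdot d_G^\ell(u,v)$ coincides with $\alpha\cdot d_{\lowG}^\ell(u,v)$ and the execution of \GR on $\lowG$ for these terminals is indistinguishable from classical \GR on $\lowG$ viewed as a \uniform multiplicative $\alpha$-spanner instance. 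The standard Althöfer girth argument then carries over verbatim (it only uses the length ordering of processed terminals, not the underlying weights), yielding $|E(H_1)| = \mathcal{O}(n^{1+1/k})$. In the geometric setting I would replace this step by the Chandra et al.\ $\mathcal{O}(n)$ bound for greedy Euclidean spanners, obtaining $|E(H_1)| = \mathcal{O}(n)$.

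For $H_2$: each terminal $(u,v) \in M\cap\highE$ causes at most one shortest $uv$-path in $\lowG$ of total length $\leq \alpha\cdot\ell(u,v)$ to be added, and the target is to bound the \emph{hop count} of each such path by $\mathcal{O}(\sqrt{n})$, yielding $|E(H_2)|=\mathcal{O}(\sqrt{n}|\highE|)$ and hence the claimed overall bound on $|E(H)|$. This last step is the main obstacle: classical Althöfer does not apply since, for $(u,v)\in\highE$, the demand $\alpha\cdot\ell(u,v)$ is strictly tighter than $\alpha\cdot d_{\lowG}^\ell(u,v)$, so the greedy execution on $\lowG$ cannot be recast as a \uniform spanner problem on $\lowG$ itself. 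Instead, the hop-count bound has to come from a structural property of the greedy spanner assembled so far --- most plausibly a hop-diameter/$\sqrt{n}$-separator-type estimate controlling the combinatorial length of substitute $\lowG$-paths --- and it is precisely this structural analysis that drives the proof.
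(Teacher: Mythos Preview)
Your reduction (size bound $\Rightarrow$ weight-ratio bound via $w(H)\le |E(H)|\cdot\maxLB\le |E(H)|\cdot\opt$) and your treatment of $H_1$ match the paper's argument. The genuine gap is in $H_2$.

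The target you set --- bounding the hop count of \emph{each individual} shortest $uv$-path in $\lowG$ by $\mathcal{O}(\sqrt{n})$ --- is simply false. A single shortest path can use $n-1$ edges (take $\lowG$ a path graph with unit lengths and a single high-weight chord between the endpoints). No ``hop-diameter/$\sqrt{n}$-separator-type estimate'' of the kind you sketch will rescue this, because there is nothing to prevent one demand pair from being far apart in $\lowG$ in hop distance. The paper explicitly acknowledges that a single path may contribute up to $n-1$ edges.

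What actually works is an \emph{aggregate} bound on the union of all these paths, not a per-path bound. The paper invokes Coppersmith and Elkin's pairwise distance preserver result (their Corollary~7.8): if shortest paths are made unique by consistent tie-breaking, then the union of the shortest paths between any $p$ node pairs contains only $\mathcal{O}(n+\sqrt{n}\,p)$ edges in total. The mechanism is consistency of unique shortest paths (any two such paths can intersect in at most a single contiguous subpath), which forces heavy edge-sharing across the collection --- not shortness of any one path. Applying this in $\lowG$ with $p=|M\cap\highE|\le|\highE|$ immediately gives $|E(H_2)|=\mathcal{O}(n+\sqrt{n}\,|\highE|)$, and the prerequisite is exactly the tie-breaking that \GR already performs. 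So the missing ingredient is not a new structural analysis of the greedy spanner, but a direct citation of this distance-preserver bound.
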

\begin{proof}
    For each $\{u,v\} \in M$, the shortest $uv$-path in $G$ by definition consists only of the edge itself.
    Crucially, all $M \cap \lowE$ still exist in the weight-restricted graph $\lowG$.
    Thus, covering these edges in phase 2 is analogous to using \GR on $\lowG$ with terminal pairs $M \cap \lowE$.
    By \GR's size guarantees, this introduces at most $\mathcal{O}(n^{1+1/k})$ edges, in general, or $\mathcal{O}(n)$ edges in the geometric setting.
    In contrast, for each edge in $M \cap \highE$, we may have to add a path consisting of up to $n-1$ low-weight edges. However, recall that we compute unique shortest paths achieved by tie-breaking. Thus, by~\cite[Corollary 7.8]{coppersmith2006}, originally formulated in the context of pairwise distance preservers, the overall number of added edges can be upper bounded by $\mathcal{O}(n + \sqrt{n} |\highE|)$.
\end{proof}

\begin{corollary}\label{cor:2PhaseGreedyExpensive}
    If $|\highE| \in o(m/ \sqrt{n})$, better approximation guarantees than ratio $m$ become possible.
    In particular, the weight ratio and size are in $\mathcal{O}(n^{1+1/k})$ in general for $|\highE| \in \mathcal{O}(\sqrt{n}\cdot n^{1/k})$, and in $\mathcal{O}(n)$ in the geometric setting for $|\highE| \in \mathcal{O}(\sqrt{n})$.
\end{corollary}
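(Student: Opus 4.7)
The plan is to obtain each part of the corollary by direct substitution into the guarantees of Lemma~\ref{lem:AGSG}, namely $\mathcal{O}(n^{1+1/k}+\sqrt{n}\,|\highE|)$ in the general undirected \generic multiplicative $\alpha$-spanner setting and $\mathcal{O}(n+\sqrt{n}\,|\highE|)$ in the geometric setting. Both expressions are additive in $|\highE|$, so each claim reduces to an elementary asymptotic manipulation; no further algorithmic reasoning is required, because \AG has already been analyzed in the lemma and the underlying spanner is the very same object.

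For the two concrete regimes I would simply plug the assumed bound on $|\highE|$ into the relevant expression. In the general case, $|\highE|\in\mathcal{O}(\sqrt{n}\cdot n^{1/k})$ gives $\sqrt{n}\cdot|\highE|\in\mathcal{O}(n\cdot n^{1/k})=\mathcal{O}(n^{1+1/k})$, so both summands in Lemma~\ref{lem:AGSG} are absorbed into $\mathcal{O}(n^{1+1/k})$, yielding the claimed size and weight-ratio bound. Analogously, in the geometric case $|\highE|\in\mathcal{O}(\sqrt{n})$ yields $\sqrt{n}\cdot|\highE|\in\mathcal{O}(n)$ and hence a total in $\mathcal{O}(n)$. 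For the qualitative first claim, $|\highE|\in o(m/\sqrt{n})$ immediately gives $\sqrt{n}\cdot|\highE|\in o(m)$; combining this with the fact that in graphs dense enough for the question to be interesting the leading term $n^{1+1/k}$ (or $n$ in the geometric case) is already in $o(m)$, Lemma~\ref{lem:AGSG} then delivers a ratio strictly below the unconditional $m$ from Theorem~\ref{thm:2PhaseGreedy}.

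Since the analytical work is already carried by Lemma~\ref{lem:AGSG}, I do not anticipate any real technical obstacle. The only mild subtlety is the careful phrasing of the qualitative first claim: a comparison with $m$ is meaningful only when $m$ dominates the inherent $n^{1+1/k}$ (or $n$) lower-order term, which is why the corollary is stated in the conservative form that better guarantees \emph{become possible} rather than uniformly improve over~$m$. Once this is acknowledged, the proof is a one-line substitution for each of the three assertions.
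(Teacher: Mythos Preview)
Your proposal is correct and matches the paper's own treatment: the corollary is stated immediately after Lemma~\ref{lem:AGSG} with no separate proof, being an evident consequence obtained by the very substitutions you describe. Your remark about the conservative phrasing of the first claim is apt and goes slightly beyond what the paper makes explicit.
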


Note that even for $|\highE| \in \mathcal{O}(1)$, in fact even $|\highE| = 1$, \GR has no weight guarantee on \generic instances.
Next, we show that, despite it working on a restricted graph to allow approximating decoupled instances, \AG retains \GR's size and lightness guarantees in the established contexts. %

\begin{lemma}\label{lem:2PhaseGreedyRetention}
    \AG has the same size and lightness guarantees as \GR for undirected \uniform multiplicative $\alpha$-spanners, for all $\alpha \geq 1$.
\end{lemma}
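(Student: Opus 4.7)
The plan is to invoke the standard Althöfer and Chandra-style analyses of \GR almost verbatim, after verifying that \AG's phase~2 on $\lowG$ still satisfies their three structural hypotheses: non-decreasing processing order, girth of the output, and the right MST baseline. First, by \Cref{lem:metricTerminalPairs} I may assume $G$ is metric (in the uniform setting $w=\ell$ lets non-metric edges be dropped without altering feasibility). Second, since undirected multiplicative $\alpha$-spanners are spanning and connected, $\wMST \leq \opt$ and line~2 applies, so $\maxLB \geq \wMST$. Because every MST edge has individual weight at most $\wMST \leq \maxLB$, we have $\mathit{MST}(G) \subseteq \lowG$, and in particular $w(\mathit{MST}(\lowG)) = \wMST$.

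For the size guarantee with $\alpha = 2k-1$, I show that $H$ has girth greater than $2k$, which yields $|H| \in \mathcal{O}(n^{1+1/k})$ via the Moore bound. Suppose for contradiction that $C \subseteq H$ is a cycle of length at most $2k$, and let $e^\star = (u^\star,v^\star) \in C$ be its heaviest edge. In the metric uniform setting, phase~2 processes $(u,v) \in K = E$ in non-decreasing length (equivalently weight) order, and for $(u,v) \in \lowE$ the canonical tie-breaking makes the shortest $uv$-path in $\lowG$ the direct edge. Thus $e^\star$ is added (either as its own direct insertion or as a constituent of a shortest-path insertion) only because $d^\ell_H(u^\star,v^\star) > \alpha \, \ell(e^\star)$ at that moment. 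But every $e' \in C \setminus \{e^\star\}$ is strictly lighter, lies in $\lowE$, and (I argue) is in $H$ by the time $e^\star$ is added, so $d^\ell_H(u^\star,v^\star) \leq (|C|-1)\ell(e^\star) \leq \alpha\,\ell(e^\star)$, a contradiction.

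For the lightness guarantees, the Chandra-style arguments rest on exactly (i)~non-decreasing weight order, (ii)~the girth property of the output, and (iii)~the weight $w(\mathit{MST})$ of the graph on which the greedy operates. All three hold for \AG: (i) and (ii) by the above, and (iii) by $w(\mathit{MST}(\lowG)) = \wMST$ from the setup. Hence the known \GR lightness bounds transfer verbatim: $\mathcal{O}(n/k)$ for $\alpha = 2k-1$, $\mathcal{O}(\log n)$ for $\alpha \geq \log n$, $\mathcal{O}(1)$ for $\alpha \geq (\log n)^2$, and the geometric-setting bounds (lightness $\mathcal{O}(\log n)$ with size $\mathcal{O}(n)$ for $\alpha > 1$).

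The main obstacle is justifying, in the girth step, that every lighter $e' \in C \setminus \{e^\star\}$ is indeed in $H$ before $e^\star$ is added. When $e'$ enters $H$ via its own direct $\lowE$-pair processing, this is immediate from the weight ordering. The delicate case is when $e'$ only enters $H$ as a constituent of a multi-edge shortest-path insertion triggered by some $\highE$-terminal pair; here one has to verify that the simultaneous insertion is still compatible with the weight-ordered greedy contradiction, exploiting that every edge of any such inserted shortest path lies in $\lowE$ and that the heaviest edge of an inserted path is bounded by the length of the inserting $\highE$-pair, so the ordering along $C$ remains consistent.
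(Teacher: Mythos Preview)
Your proposal leaves a genuine gap precisely where you flag the ``main obstacle,'' and the route you sketch for closing it is both unnecessary and potentially unsound. When an $\highE$-pair $(u,v)$ triggers a multi-edge insertion, the constituent edges are added \emph{simultaneously}; for such an edge $e^\star=(u^\star,v^\star)$ you have no access to a statement of the form $d^\ell_H(u^\star,v^\star) > \alpha\,\ell(e^\star)$ at the moment of insertion---the only inequality available is $d^\ell_H(u,v) > \alpha\,\ell(u,v)$ for the triggering pair. Your hope that ``the ordering along $C$ remains consistent'' is not an argument, and short cycles can in principle be created by inserting an edge $(a,b)$ whose own demand was already satisfied earlier.

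The paper sidesteps all of this with a one-line observation you already have the ingredients for. You established $\maxLB \geq \wMST$ and $\mathit{MST}(G)\subseteq \lowG$. Since \GR and \AG behave identically on $\lowE$-pairs (metric, so the shortest path in $\lowG$ is the edge itself), and since \GR's output on those pairs already contains an MST, so does \AG's partial spanner after all $\lowE$-pairs are processed. Now every $e=(u,v)\in\highE$ has $\ell(e)=w(e)>\maxLB\geq\wMST$, so the MST already inside $H$ furnishes a $uv$-path of length at most $\wMST<\ell(e)\leq\alpha\,\ell(e)=\demand(u,v)$. Hence no $\highE$-pair ever triggers an insertion, the ``delicate case'' is vacuous, and in fact \AG and \GR output the \emph{same} spanner. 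All size and lightness guarantees then transfer immediately, without re-running any girth or Chandra-style analysis.
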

\begin{proof}
    Let \GR and \AG run in parallel. Recall that $K=M$ suffices. Both process edges in the same order, starting with $\lowE$.
    For all $e \in  \lowE$, they behave identically. 
    \GR always includes an MST~\cite{althofer1993} and note that $\maxLB \geq \wMST$.
    Hence, after covering all $e \in  \lowE$, \AG contains an MST, too,
    and the weight and thus length of every $e \in \highE$ is larger than $\wMST$.
    Thus, for any $e \in \highE$, a sufficiently short path through the MST already exists and \AG does not add any more edges. %
    For \uniform $\alpha$-spanners, \GR and \AG yield the same spanner.
\end{proof}

\begin{corollary}\label{cor:2PhaseGreedyRetention}
    \AG yields undirected \uniform $\alpha$-spanners of lightness (and thus weight ratio) $\mathcal{O}(n/k)$ and size $\mathcal{O}(n^{1+1/k}$), where $\alpha=2k-1$, $k \in \mathbb{N}_{\geq 1}$.
    In the geometric setting, these improve to $\mathcal{O}(\log n)$ and $\mathcal{O}(n)$, respectively, for all $\alpha>1$.
    For $\alpha \geq \log n$ and $\alpha \geq (\log n)^2$, lightness even is in $\mathcal{O}(\log n)$ and $\mathcal{O}(1)$, respectively.
    Violating distance demands by a factor of $(1+\varepsilon)$, $\varepsilon>0$, we yield $(1+\varepsilon) (2k-1)$-spanners of lightness $\mathcal{O}(n^{1/k} (1+k/(\varepsilon^{1+1/k} \log k)))$.
\end{corollary}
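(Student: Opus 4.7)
The plan is to invoke \Cref{lem:2PhaseGreedyRetention} as the workhorse and then essentially read off each claimed bound from the already-existing literature on \GR. The lemma establishes that, for undirected \uniform multiplicative $\alpha$-spanners, the spanner returned by \AG coincides exactly with the one returned by \GR (run on the same instance). Since size, weight, and lightness are all functions of the returned edge set, every absolute and lightness guarantee known for \GR transfers verbatim to \AG, which is all the corollary asserts.

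Concretely, I would structure the proof as a single short paragraph: by \Cref{lem:2PhaseGreedyRetention}, the output of \AG equals the output of \GR in this setting, so it suffices to collect the known \GR bounds. For general $\alpha=2k-1$ and $k \in \mathbb{N}_{\geq 1}$, the lightness $\mathcal{O}(n/k)$ and size $\mathcal{O}(n^{1+1/k})$ are from Althöfer et al.~\cite{althofer1993}. In the geometric setting (complete graphs with Euclidean lengths), the bounds $\mathcal{O}(\log n)$ for lightness and $\mathcal{O}(n)$ for size for any $\alpha>1$, as well as the improved lightness $\mathcal{O}(\log n)$ for $\alpha \geq \log n$ and $\mathcal{O}(1)$ for $\alpha \geq (\log n)^2$, are due to Chandra et al.~\cite{chandra1992}. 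Finally, the $(1+\varepsilon)(2k-1)$-spanner with lightness $\mathcal{O}(n^{1/k}(1+k/(\varepsilon^{1+1/k}\log k)))$ is the guarantee of~\cite{elkin2015light}. Since lightness upper-bounds the weight ratio (the MST is a feasible lower bound on any undirected spanning solution, as noted just before \Cref{alg:AdaptedGreedy}), each lightness bound directly yields the corresponding weight-ratio bound.

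The only subtle point worth double-checking is that every guarantee cited for \GR indeed refers to the tie-breaking variant used in \Cref{alg:StandardGreedy} (unique shortest paths), which is precisely the variant analysed in all cited works; no extra argument is needed. There is no real obstacle here: \Cref{lem:2PhaseGreedyRetention} has already done the substantive work of showing that \AG does not discard any edge \GR would have included nor include any edge \GR would have skipped, so the corollary is essentially a bookkeeping statement consolidating the inherited bounds.
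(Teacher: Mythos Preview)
Your proposal is correct and follows essentially the same route as the paper: invoke \Cref{lem:2PhaseGreedyRetention} and then cite \cite{althofer1993}, \cite{chandra1992}, \cite{chandra1992}, and \cite{elkin2015light} for the four respective bounds. One small wording issue: the improved lightness bounds for $\alpha \geq \log n$ and $\alpha \geq (\log n)^2$ are for the general (non-geometric) \uniform setting, not the geometric one as your phrasing suggests, though the citation to \cite{chandra1992} is correct either way.
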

These four statements follow from combining \Cref{lem:2PhaseGreedyRetention} with~\cite{althofer1993}, \cite{chandra1992}, \cite{chandra1992}, and~\cite{elkin2015light}, respectively.
\Cref{cor:2PhaseGreedyRetention,cor:2PhaseGreedyExpensive} 
also hold for undirected \generic linear $(\alpha, \beta)$-spanners, for all $\beta\geq 0$.

For undirected basic additive $+\beta$-spanners and $\beta=2$, %
\GR guarantees size in $\mathcal{O}(n^{3/2})$~\cite{knudsen2014}.
The proof can easily be generalized to accommodate \AG by adjusting some minor details, even in the unit-length setting and for any constant $\beta \geq 2$.
As the resulting proof works analogous to~\cite{knudsen2014}, we refer to~\Cref{appendix:AdaptedGreedy}.
The key observation is that, to achieve size $\mathcal{O}(n^{3/2})$, the $uv$-paths added between $\{u,v\} \in K$ during construction, do not necessarily have to be shortest paths. %
For any constant $\beta \geq 2$, paths of length at most $\demand(u,v)=\dist{G}{\ell}(u,v) + \beta$ suffice. %

\begin{corollary}\label{cor:AdaptedGreedy+beta}
    \AG yields undirected unit-length additive $+\beta$-spanners of size and weight ratio $\mathcal{O}(n^{3/2})$, for any constant $\beta \geq 2$.
\end{corollary}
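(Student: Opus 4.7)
The plan is to split the argument into two independent pieces: a reduction of the weight guarantee to a size guarantee, followed by an adaptation of Knudsen's size analysis~\cite{knudsen2014} of \GR for basic additive $+2$-spanners.

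For the first piece, since lengths are uniform, every edge retained in $\lowG$ — and therefore every edge of the output $H\subseteq\lowG$ — has weight at most $\maxLB\leq\opt$. Hence $w(H)\leq|E(H)|\cdot\maxLB$, so once a size bound of $\mathcal{O}(n^{3/2})$ is established for $H$, the weight ratio follows immediately as $w(H)\in\mathcal{O}(n^{3/2})\cdot\opt$. In particular, no separate weight argument is needed beyond the size bound.

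For the second piece, the key observation is that phase 2 of \AG invokes \GR on $\lowG$ rather than on $G$, and $\lowG$ is by construction a feasible $+\beta$-spanner of $G$. Consequently, whenever \GR adds a $uv$-path $P$ for some $\{u,v\}\in K$, that path is a shortest $uv$-path in $\lowG$, and so $|P|\leq d_{\lowG}^\ell(u,v)\leq d_G^\ell(u,v)+\beta$. As indicated by the authors immediately before the statement, the crucial feature of Knudsen's proof is that it never uses the stronger property that $P$ is a shortest path in $G$; it only exploits the inequality $|P|\leq d_G^\ell(u,v)+\beta$ together with a counting/charging argument over the vertex set. I would therefore revisit Knudsen's proof and replace the constant additive slack $2$ by an arbitrary constant $\beta$ throughout, adjusting the internal cluster radii, dense/sparse thresholds, and counting constants accordingly. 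Because $\beta$ is a constant, these substitutions affect only constants hidden in the $\mathcal{O}$-notation, so the overall size estimate remains $\mathcal{O}(n^{3/2})$.

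The main obstacle is to verify that the charging-disjointness step in Knudsen's argument — the source of the square-root savings that yield $n^{3/2}$ rather than $n^2$ — remains valid when added paths are only required to be short by an additive constant, and when these paths live in $\lowG$ rather than in $G$. Since all of the relevant inequalities absorb an additional additive constant without structural change, and since the analysis is entirely graph-theoretic and thus insensitive to the presence of non-trivial weights in the unit-length setting, the adaptation is expected to be mechanical but requires careful bookkeeping of the $\beta$-dependence; this routine calculation is the natural candidate to defer to the appendix, as the authors do.
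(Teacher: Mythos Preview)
Your high-level plan is correct and matches the paper's approach exactly: derive the weight ratio from the size bound via $w(e)\leq\maxLB\leq\opt$ for all $e\in\lowE$, and then adapt Knudsen's analysis using only the property that each added path satisfies $|P|=d^\ell_{\lowG}(u,v)\leq d_G^\ell(u,v)+\beta$.

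One caveat on the mechanics: Knudsen's proof does \emph{not} involve cluster radii, dense/sparse thresholds, or a charging-disjointness step, so you will not find those ingredients to adjust. It is a potential-function argument: one tracks $c(H)=\sum_v(\deg_H(v))^2$ and $v(H)=\sum_{u,v}\max\{0,\,d_G^\ell(u,v)-d_H^\ell(u,v)+(\beta+3)\}$, shows that $c(H)-12\,v(H)$ is non-increasing across path insertions, and obtains the $n^{3/2}$ bound from Cauchy--Schwarz on $c(H)$. The actual $\beta$-dependent changes are (i) the constant $(\beta+3)$ in the definition of $v(H)$, and (ii) checking that the step ``no vertex is adjacent in $H$ to more than three vertices of $P$'' survives---which it does, because $P$ is a shortest path in $\lowG$ and $H\subseteq\lowG$. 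With that correction to your description of Knudsen's argument, the adaptation is indeed mechanical, as the paper's appendix confirms.
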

Hence, \AG not only matches the best possible~\cite{woodruff2006} size guarantee for $\beta=2$, 
but derives new weight guarantees in the unit-length setting, even for all constant $\beta \geq 2$.%

\section{Randomized Rounding}\label{sec:RandomizedRounding}
We now describe our $\ln(\myRRratio) \in \mathcal{O}(n \log n)$-approximation algorithm for the (un)directed \generic \freeform\ spanner problem, with maximum distance demand $\maxDemand$ and $\maxNumCuts=(\maxDemand + 2)^{n-2}$. %
As the other state-of-the-art algorithms for decoupled spanner problems~\cite{chimani2014network,grigorescu2023approximation} %
we assume integer lengths (and, thus, w.l.o.g.\ integer distance demands) and polynomially bounded distance demands.
The running time is polynomial in $n$, and the approximation ratio of $\ln(\myRRratio)$ 
is in $\mathcal{O}(n \log n)$. 
For simplicity, we assume a directed setting. Undirected settings work analogous, by bi-directing the graph.

We use multicommodity flow in layered graphs.
Dodis and Khanna~\cite{dodis1999} apply a similar technique to unit-length multiplicative $\alpha$-spanners.
Our generalization can handle both non-unit lengths and \freeform\ distance demands.
A key challenge lies in the non-unit lengths.
In contrast to~\cite{dodis1999}  (see \Cref{sec:preliminaries}), we do not subdivide edges and we do not use a unit-length spanner algorithm as a subroutine.
Instead, we encode lengths directly in the graph extension.
Ultimately, our construction allows us to effortlessly enforce distance demands of any form for any subset of node pairs.
We establish the following graph extension; for easier distinction, we use the term \emph{arc} to refer to the directed edges in the extension.

\begin{definition}[$\maxDemand$-extension]\label{def:lambdaExtension}
    The \emph{$\maxDemand$-extension} of a directed graph $G=(V,E)$ with integer lengths and distance demands
    is a layered directed graph $\extension{G} = (\extension{V}, \extension{E})$ with:
    \begin{itemize}
        \item $\maxDemand + 1$ layers of nodes $V_0^{\maxDemand}, V_1^{\maxDemand}, \dots, V_{\maxDemand}^{\maxDemand}$ where each $V_i^{\maxDemand}$ is a copy of $V$. For $q \in V$, $q_i$ denotes the copy of $q$ in $V_i^{\maxDemand}$. %
        \item For each edge $(s,t)= e \in E$, %
        there are arcs $(s_i, t_{i+ \ell(e)})$, for all $0 \leq i \leq \maxDemand - \ell(e)$. Additionally, for all $q \in V$, there are \emph{self-arcs} $(q_i, q_{i+1})$, for all $0 \leq i < \maxDemand$. 
    \end{itemize}
\end{definition}

Clearly, $|\extension{V}| = n(\maxDemand + 1)$ and $|\extension{E}| \leq (n+m) \maxDemand$. \Cref{exa:DeltaExtension} illustrates our construction.

\begin{example}\label{exa:DeltaExtension}
    \Cref{fig:myExample} shows a simple example.
We consider a directed \generic \freeform\ spanner problem instance with graph $G=(V,E)$, $V=\{a,b,c\}$, $E=\{(a,b), (a,c), (c,b)\}$, terminal pairs $K=E$, weights $w(a,b)=5$, $w(a,c)=w(c,b)=1$, lengths $\ell(a,b)=\ell(c,b)=1$, $\ell(a,c)=2$, and distance demands $\demand(a,b)=3$, $\demand(a,c)=\demand(c,b)=2$. Thus, $\maxDemand=3$. 
Clearly, the optimum solution to this instance is $H=(V,E')$ with $E'=\{ (a,c), (c,b) \}$ of total weight $2$.\hfill $\diamond$
\end{example}

\begin{figure}
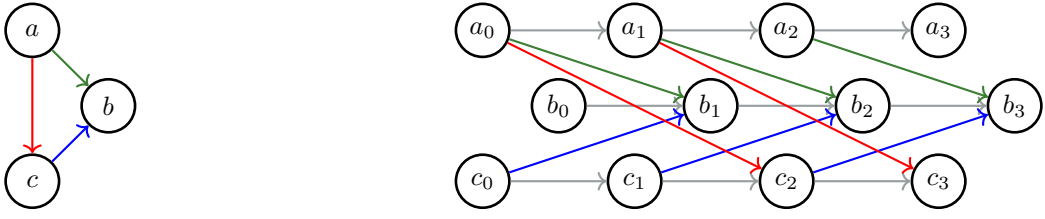

    \centering
        \begin{subfigure}{0.30\textwidth}
        \centering
        \includeTikzGraph{myTriangle}
    \end{subfigure}
    \hfill
    \begin{subfigure}{0.65\textwidth}
        \centering
        \includeTikzGraph{myLayeredTriangle}
    \end{subfigure}
    \caption{The graph $G$ (left) and its $3$-extension (right), as described in \Cref{exa:DeltaExtension}.}
    \label{fig:myExample}
\end{figure}

We state the spanner problem as a standard multicommodity flow LP in the $\maxDemand$-extension of $G$, where we ship one commodity from $u_0$ to $v_{\demand(u,v)}$, for each $(u,v) \in K$.
We then round the fractional solution to obtain a subgraph $H=(V,E')$ and prove that $H$ forms a feasible spanner w.h.p.\ via a simple cut argument.
The full algorithm is shown in \Cref{alg:randomizedRounding}.
Let us first discuss the formulation \MCF below:
There are \emph{flow variables}~\eqref{lp:ours:flowVars} and \emph{edge variables}~\eqref{lp:ours:arcVars}.
We use standard flow conservation constraints~\eqref{lp:ours:flowCons}, where 
the indicator function $\mathbbm{1}_\varphi$ is $1$ if the Boolean expression $\varphi$ is true, and $0$ otherwise.
Constraints~\eqref{lp:ours:edgeCons} ensure that if any non-self-arc in $\extension{E}$ carries flow, its corresponding edge variable must be active. The objective function~\eqref{lp:ours:obj} considers the edge variables to minimize the weight of the spanner.
In total, \MCF contains $\mathcal{O}(\myNumVars)$ variables, $\mathcal{O}(\myNumCons)$ constraints, and the largest coefficient stems from the input weights. It can directly be solved using any standard (polynomial-time) LP solver.
\label{lp:ours}
\begin{align}
\hspace*{-5mm}\MCF\qquad  & \text{min} \sum_{e \in E} w(e) x_{e} & \label{lp:ours:obj}\\
\text{s.t.\ } &        \sum_{i=0}^{\maxDemand - \ell(e)}  f^{uv}_{(s_i,t_{i+\ell(e)})}  \leq x_e
                            & \forall (u,v) \in K,\forall e=(s,t) \in E \label{lp:ours:edgeCons}\\
&             \sum_{(q,\cdot) \in \extension{E}} f^{uv}_{(q,\cdot)} 
                - \sum_{(\cdot,q) \in \extension{E}} f^{uv}_{(\cdot,q)} = \mathbbm{1}_{q = u_0} - \mathrlap{\mathbbm{1}_{q = v_{\demand(u,v)}}} & \forall (u,v) \in K, \forall q \in \extension{V} \label{lp:ours:flowCons}\\
&              0 \leq f^{uv}_{(q,r)} \leq 1 & \forall (u,v) \in K, \forall (q,r) \in \extension{E} \label{lp:ours:flowVars} \\
&              0 \leq x_{e} \leq 1 & \forall e \in E \label{lp:ours:arcVars}
\end{align}%

We now derive a subgraph $H=(V,E')$ from an optimal solution to \MCF.
Let $x_e^*$ denote an optimal (fractional) solution value for variable $x_e$.
Let $\gamma$ be a value defined later. %
For each edge $e\in E$, we include $e$ in $E'$ with probability $\min\{1, \gamma x_e^*\}$. %
The expected total weight of $E'$ is then at most a factor of $\gamma$ away from the optimum spanner.

\begin{algorithm}[bt]
\caption{\RR.}\label{alg:randomizedRounding}
\begin{algorithmic}[1]
\Require graph $G=(V,E)$, terminal pairs $K$, weights $w$, lengths $\ell$, distance demands $\demand$
\State solve \MCF $\rightarrow$ optimal fractional solution values $x_e^*$ for all $e \in E$ 
\State $\gamma \leftarrow \ln(\myRRratio)$ with $\maxNumCuts=(\maxDemand + 2)^{n-2}$
\State $E'\leftarrow\emptyset$
\For{$e \in E$}
    \State add $e$ to $E'$ with probability $\min\{1, \gamma x_e^*\}$
\EndFor
\State \textbf{return} $H=(V,E')$
\end{algorithmic}
\end{algorithm}

We show that $H$ forms a feasible spanner with high probability by a suitable choice for~$\gamma$.
To this end, we use a simple cut argument in the $\maxDemand$-extension $\extension{H}=(\extension{V}, \extension{E'})$ of $H$.
A $u_0v_{\demand(u,v)}$-cut $(A{:}B)$ partitions the nodes $\extension{V}$ into disjoint subsets $A$ and $B$ with $u_0 \in A$ and $v_{\demand(u,v)} \in B$.
A cut $(A{:}B)$ is \emph{satisfied} if at least one arc in $\extension{E'}$ goes from $A$ to $B$, see
\Cref{fig:myAscendingCutExample}. %
\begin{figure}
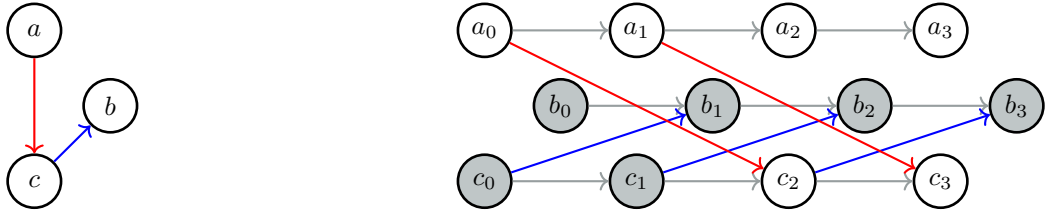

    \centering
        \begin{subfigure}{0.30\textwidth}
        \centering
        \includeTikzGraph{myTriangleSolution}
    \end{subfigure}
    \hfill
    \begin{subfigure}{0.65\textwidth}
        \centering
        \includeTikzGraph{myAscendingCutExampleInH}
    \end{subfigure}
    \caption{The optimum solution $H \subset G$ of \Cref{exa:DeltaExtension} (left) and an $a_0b_3$-cut $(A{:}B)$ ($A$ is white and $B$ is gray) in its $3$-extension satisfied by arc $(c_2, b_3)$ (right). The cut is ascending, induced by the node labeling $\myLabeling(a)=0$, $\myLabeling(b)=\demand(a,b)+1=4$, and $\myLabeling(c)=2$.}
    \label{fig:myAscendingCutExample}
\end{figure}
Recall that $H$ is feasible if and only if $d_{H}^{\ell}(u,v) \leq \demand(u,v)$, for all $(u,v) \in K$.%

\begin{lemma}\label{lem:lambdaCutEquivalence}
    Let $H=(V,E') \subseteq G$ with $\maxDemand$-extension $\extension{H}$ and any $(u,v) \in K$. Then %
    $$ \textsf{\emph{\textbf{(1)\ }}} d_{H}^{\ell}(u,v) \leq \demand(u,v) \Leftrightarrow \textsf{\emph{\textbf{(2)\ }}} \text{\emph{every $u_0v_{\demand(u,v)}$-cut in $\extension{H}$ is satisfied.}} $$
\end{lemma}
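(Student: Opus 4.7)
The plan is to recast the cut condition as reachability and then show that reachability from $u_0$ to $v_{\demand(u,v)}$ in $\extension{H}$ is equivalent to the existence of a short enough $uv$-walk in $H$.

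First I would record the standard observation that statement \textbf{(2)} is equivalent to saying that $v_{\demand(u,v)}$ is reachable from $u_0$ in $\extension{H}$. Indeed, if every $u_0v_{\demand(u,v)}$-cut is satisfied then, in particular, the cut that puts the set $A$ of nodes reachable from $u_0$ on one side must be satisfied, forcing $v_{\demand(u,v)}\in A$; conversely, any $u_0v_{\demand(u,v)}$-path in $\extension{H}$ must cross every such cut by at least one arc. With this reduction, it suffices to show equivalence of \textbf{(1)} and the statement \textbf{(2')}: there exists a directed path from $u_0$ to $v_{\demand(u,v)}$ in $\extension{H}$.

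For \textbf{(1)} $\Rightarrow$ \textbf{(2')}, I would take any $uv$-path $u=p_0,p_1,\dots,p_k=v$ in $H$ of length $L_k\coloneqq \sum_{i=1}^k \ell(p_{i-1},p_i)\le \demand(u,v)$. Setting $L_0 \coloneqq 0$ and $L_j \coloneqq \sum_{i=1}^{j}\ell(p_{i-1},p_i)$, each edge $(p_{i-1},p_i)\in E'$ induces (by \Cref{def:lambdaExtension}) the arc $(p_{i-1,L_{i-1}},p_{i,L_i})\in \extension{E'}$ since $L_{i-1}\le L_k\le \demand(u,v)$. Concatenating these arcs yields a path from $u_0$ to $v_{L_k}$; if $L_k<\demand(u,v)$, I append the self-arcs $(v_{L_k},v_{L_k+1}),\dots,(v_{\demand(u,v)-1},v_{\demand(u,v)})$, which all exist in $\extension{E'}$ by construction.

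For \textbf{(2')} $\Rightarrow$ \textbf{(1)}, I would project any $u_0v_{\demand(u,v)}$-path $P$ in $\extension{H}$ back to $G$: an arc $(s_i,t_{i+\ell(e)})\in P$ arising from $e=(s,t)\in E'$ projects to the edge $e$ itself, while self-arcs $(q_i,q_{i+1})$ are discarded. Since all arcs in $\extension{E}$ go strictly from a lower to a higher layer, the image is a $uv$-walk in $H$; moreover, every non-self-arc increases the layer index by exactly the length of its underlying edge, and every self-arc by exactly one, so the total length of edges contained in the walk equals $\demand(u,v)$ minus the number of self-arcs used, hence at most $\demand(u,v)$. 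Therefore $d_H^\ell(u,v)\le \demand(u,v)$.

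The only mild pitfall is the bookkeeping in the forward direction when $L_k<\demand(u,v)$, which forces the use of self-arcs to reach the correct target copy; this is precisely why Definition~\ref{def:lambdaExtension} introduces self-arcs, so the argument is clean once they are invoked. No step requires more than elementary layer arithmetic.
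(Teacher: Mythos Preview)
Your proposal is correct and follows essentially the same approach as the paper: both directions hinge on translating between short $uv$-paths in $H$ and $u_0v_{\demand(u,v)}$-paths in $\extension{H}$ via the layer arithmetic of \Cref{def:lambdaExtension}, padding with self-arcs in the forward direction and discarding them in the backward one. You simply make explicit the standard equivalence between ``every cut is satisfied'' and reachability, which the paper invokes implicitly.
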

\begin{proof}
    Assume (1) %
    and let $P \subseteq E'$ be a shortest $uv$-path in $H=(V,E')$.
    Then $\ell(P) \leq \demand(u,v)$.
    We can construct a $u_0 v_{\demand(u,v)}$-path $\extension{P} \subseteq \extension{E'}$ in $\extension{H}$ using the arcs corresponding to edges in $P$ (and self-arcs), and (2) holds.
    Conversely, if (2), there is a path $\extension{Q}$ from $u_0$ to $v_{\demand(u,v)}$ in $\extension{H}$. $\extension{Q}$ induces a $uv$-path of length at most $\demand(u,v)$ in $H$ (removing self-arcs). %
\end{proof}

We argue that only a certain kind of cuts may ever be non-satisfied in $\extension{H}$.

\begin{definition}[Ascending $u_0v_{\demand(u,v)}$-cut]\label{def:ascendingCut}
    Let $H=(V,E')$ with $\maxDemand$-extension $\extension{H}$ and $u,v \in V$.
    A $u_0v_{\demand(u,v)}$-cut $(A{:}B)$ in $\extension{H}$ is \emph{ascending} if it is \emph{induced} by a node labeling $\myLabeling \colon V \rightarrow \{0,1,\dots, \demand(u,v)+1\}$, 
    i.e., for every $q \in V$ we have that $q_i \in B$ for all $0 \leq i < \myLabeling(q)$, and $q_i \in A$ for all $\myLabeling(q) \leq i \leq \demand(u,v)+1$.
\end{definition}

Necessarily, $\myLabeling(u) = 0$ and $\myLabeling(v) = \demand(u,v) +1$.
Observe that \Cref{fig:myAscendingCutExample} indeed shows an ascending cut.
A $u_0v_{\demand(u,v)}$-cut $(A{:}B)$ is non-ascending if and only if there is a $q \in V$  %
with $q_i \in A$ and $q_{i+1} \in B$, for some $0 \leq i \leq \demand(u,v)$. Such a cut is trivially satisfied by self-arc $(q_i, q_{i+1})$.
Hence, we only have to consider ascending cuts.
For $(u,v) \in K$, let $\mathcal{C}_{uv}$ denote the set of all ascending $u_0v_{\demand(u,v)}$-cuts in~$\extension{H}$.
By construction via the node labeling $\myLabeling$, we get $|\mathcal{C}_{uv}| = (\demand(u,v)+2)^{n-2}$.
Define $\maxNumCuts \coloneqq \max_{(u,v) \in K} |\mathcal{C}_{uv}| \leq (\maxDemand + 2)^{n-2}$ as the maximum number of ascending $u_0v_{\demand(u,v)}$-cuts for any $(u,v) \in K$.
Next, we bound the probability that an ascending cut is non-satisfied in the $\maxDemand$-extension of our rounded subgraph.

\begin{observation}\label{lem:probability}
        Let $X_1, \dots, X_k$ be independent, binary, random variables with $\Prob(X_i) = \min\{1, p_i\}$ and $\sum_{i=1}^{k} p_i\geq S$. Then $\Prob[\bigwedge_{i=1}^{k} (X_i=0)] = \prod_{i=1}^{k}(1-p_i) \leq e^{-( \sum_{i=1}^{k}p_i)} \leq e^{-S}$. 
\end{observation}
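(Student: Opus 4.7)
The plan is straightforward since this is essentially folklore. First, I would establish the equality $\Prob[\bigwedge_{i=1}^{k} (X_i=0)] = \prod_{i=1}^{k}(1-p_i)$ by invoking independence of the $X_i$: the event $\bigwedge_i (X_i = 0)$ is the intersection of the independent events $\{X_i = 0\}$, each of which has probability $1 - \min\{1, p_i\}$. This equals $1 - p_i$ when $p_i \leq 1$; for $p_i > 1$ the factor is $0$, in which case the joint probability collapses to zero and the remaining chain of inequalities holds trivially, so I may assume all $p_i \in [0,1]$ from here on.

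Next, I would invoke the elementary bound $1 - x \leq e^{-x}$, valid for every $x \in \mathbb{R}$ by convexity of $\exp$ (the tangent line to $e^x$ at $0$ lies below the graph). Applying this factor-wise yields
\[
\prod_{i=1}^{k}(1-p_i) \;\leq\; \prod_{i=1}^{k} e^{-p_i} \;=\; \exp\!\Bigl(-\textstyle\sum_{i=1}^{k} p_i\Bigr).
\]
Finally, since $\sum_{i=1}^k p_i \geq S$ by assumption and $x \mapsto e^{-x}$ is monotonically decreasing, I obtain $\exp(-\sum_i p_i) \leq e^{-S}$, closing the chain.

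There is no genuine obstacle here; the argument is a two-line chain of standard facts once independence is unpacked. The only minor care needed is the degenerate case $p_i > 1$ mentioned above, and applying the bound $1 - x \leq e^{-x}$ with the correct sign of the exponent—neither of which requires machinery beyond basic analysis.
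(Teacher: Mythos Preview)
Your proposal is correct and matches the paper's intent: the Observation is stated without a separate proof, the chain of (in)equalities being self-evident from independence, the elementary bound $1-x\le e^{-x}$, and monotonicity of $e^{-x}$. Your handling of the degenerate case $p_i>1$ is a nice touch that the paper leaves implicit.
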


\begin{lemma}\label{lem:cutSatisfyProbability}
    For $(u,v) \in K$, a given ascending $u_0v_{\demand(u,v)}$-cut $(A{:}B)$ is non-satisfied in $\extension{H}$ with probability at most $e^{-\gamma}$.
\end{lemma}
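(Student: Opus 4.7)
My plan is to combine LP flow-conservation with the independence of the rounding decisions, invoking Observation~\ref{lem:probability} as the closing step.

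First, I would fix $(u,v) \in K$ and an ascending $u_0 v_{\demand(u,v)}$-cut $(A{:}B)$ induced by some labeling $\myLabeling$. Let $E_{AB} \subseteq E$ be the set of edges $e=(s,t) \in E$ that produce at least one arc $(s_i, t_{i+\ell(e)})$ going from $A$ to $B$, i.e., with $\myLabeling(s)\leq i$ and $i+\ell(e) < \myLabeling(t)$. The crucial structural observation is that self-arcs $(q_i, q_{i+1})$ never cross from $A$ to $B$ in an ascending cut: if $q_i \in A$, then $\myLabeling(q) \leq i < i+1$, so $q_{i+1} \in A$ as well. Therefore all arcs in $\extension{E}$ crossing $(A{:}B)$ correspond to genuine edges from $E_{AB}$, and the cut is non-satisfied in $\extension{H}$ if and only if \emph{none} of the edges in $E_{AB}$ are selected into $E'$.

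Second, I would use the LP to lower-bound $\sum_{e \in E_{AB}} x_e^*$ by $1$. Flow conservation~\eqref{lp:ours:flowCons} with source $u_0$ and sink $v_{\demand(u,v)}$ implies that the net $f^{uv}$-flow across any $u_0 v_{\demand(u,v)}$-cut equals $1$; since no $f^{uv}$ can exceed $1$ and the cut only contains non-self-arcs, the plain flow from $A$ to $B$ is at least $1$. Grouping arcs by their underlying edge and using constraint~\eqref{lp:ours:edgeCons}, which says $\sum_{i=0}^{\maxDemand-\ell(e)} f^{uv}_{(s_i,t_{i+\ell(e)})} \leq x_e^*$ for every $e=(s,t)\in E$, yields
\[
    1 \;\leq\; \sum_{e \in E_{AB}} \; \sum_{\substack{i:\, (s_i,t_{i+\ell(e)})\\ \text{crosses }(A{:}B)}} f^{uv}_{(s_i,t_{i+\ell(e)})} \;\leq\; \sum_{e \in E_{AB}} x_e^*.
\]

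Third, I would conclude by randomized rounding. The inclusion indicators of the edges in $E_{AB}$ are mutually independent Bernoulli variables with parameters $p_e = \min\{1,\gamma x_e^*\}$, and the cut stays non-satisfied exactly when all of them are zero. Applying Observation~\ref{lem:probability} with $S = \gamma \sum_{e \in E_{AB}} x_e^* \geq \gamma$, the probability that none of the edges in $E_{AB}$ is chosen is at most $e^{-\gamma}$, as claimed.

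The only delicate point, and what I would treat most carefully, is the ascending/self-arc argument: without it one would have to argue about self-arcs in the cut capacity, which would artificially inflate the bound. Once ascendingness is exploited to discard self-arcs and to identify each cut-arc with a unique edge of $E$, the rest is a textbook application of the LP constraints and Observation~\ref{lem:probability}.
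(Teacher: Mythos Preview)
Your proof is correct and follows essentially the same approach as the paper: identify the edges whose extension-arcs cross the ascending cut, use flow conservation together with constraint~\eqref{lp:ours:edgeCons} to bound $\sum x_e^* \geq 1$, and finish via Observation~\ref{lem:probability}. Your explicit treatment of why self-arcs cannot cross $(A{:}B)$ in an ascending cut, and your careful definition of $E_{AB}$ over $E$ rather than $E'$, make the argument slightly more precise than the paper's own write-up, but the underlying idea is identical.
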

\begin{proof}
    Let $\extension{E'}_c\subseteq \extension{E'}$ be the set of arcs %
    that go from $A$ to $B$ in $\extension{H}$. %
    Let $E'_c \subseteq E'$ be their corresponding edges in $H$.
    Since one unit of flow must cross from $A$ to $B$ in $\extension{H}$, we get
    $1 \leq \sum_{a \in \extension{E'}_c} {f^{uv}_a}^* \leq \sum_{e \in E'_c} x_e^* $, where ${f^{uv}_a}^*$ are the optimal solution values for $f^{uv}_a$, respectively.
    Each $e \in E'_c$ is chosen with probability $\min\{1, \gamma x_e^*\}$ and $\sum_{e \in E_c} \gamma x_e^* \geq \gamma$. 
    Thus, by 
    \Cref{lem:probability},
    we get $\Prob[(A{:}B)\text{ is not satisfied}] =\Prob[E_c \cap E' = \emptyset]\leq e^{-\gamma}$.
\end{proof}

We are now ready to define $\gamma \coloneqq \ln(n\maxNumCuts|K|)$ and get $e^{-\gamma} = \frac{1}{n\maxNumCuts|K|}$.
By \Cref{lem:lambdaCutEquivalence}, $H$ is infeasible if and only if, for any $(u,v) \in K$, there is an ascending $u_0v_{\demand(u,v)}$-cut that is not satisfied.
By union bound, we can estimate this probability as
\begin{equation}%
\label{eq:probFeas}
    \sum_{(u,v) \in K} \sum_{(A{:}B) \in \mathcal{C}_{uv}} e^{-\gamma} \leq  \maxNumCuts |K| \cdot  e^{-\gamma} = \frac{\maxNumCuts |K|}{\myRRratio} = \frac{1}{n}. 
\end{equation}
Hence, our rounded solution $H=(V,E')$ forms a feasible spanner with probability at least $1-\frac{1}{n}$.
The total running time of \Cref{alg:randomizedRounding} is in $\mathcal{O}(\myLP)$; in particular, it is polynomial in $n$ if $\maxDemand \in \mathcal{O}(\poly(n))$.
Using $\maxDemand \leq n L$ and $n \leq m$ yields an upper bound of $\mathcal{O}(\myLPUB)$.
Thus

\begin{theorem}\label{thm:RandomizedRounding}
    For (un)directed \generic \freeform\ spanner problems with integer lengths and polynomially bounded distance demands, \Cref{alg:randomizedRounding} returns a feasible spanner w.h.p., with expected approximation ratio $\gamma = \ln(n \maxNumCuts|K| ) \leq \ln(n^3 (\maxDemand + 2)^{n-2}) \in \mathcal{O}(n \log \maxDemand)\subseteq \mathcal{O}(n \log n)$.
\end{theorem}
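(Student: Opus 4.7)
My plan is to split the argument into three parts: feasibility with high probability, the expected approximation ratio, and the running time. The ingredients are already in place through Lemmas \ref{lem:lambdaCutEquivalence} and \ref{lem:cutSatisfyProbability}, and what remains is essentially gluing them together via a union bound and linearity of expectation.

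For \textbf{feasibility}, I would invoke \Cref{lem:lambdaCutEquivalence}: the rounded subgraph $H$ fails to satisfy some terminal pair $(u,v) \in K$ if and only if there exists a $u_0 v_{\demand(u,v)}$-cut in $\extension{H}$ that is not satisfied. Since non-ascending cuts are satisfied trivially by a self-arc $(q_i, q_{i+1})$, it suffices to control ascending cuts. Using the bound $|\mathcal{C}_{uv}| \leq (\maxDemand+2)^{n-2} = \maxNumCuts$ on the number of ascending cuts per pair, together with \Cref{lem:cutSatisfyProbability} which bounds the probability that a fixed ascending cut is non-satisfied by $e^{-\gamma}$, a union bound over all $(u,v) \in K$ and all ascending cuts gives failure probability at most $|K| \cdot \maxNumCuts \cdot e^{-\gamma}$. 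Plugging in $\gamma = \ln(n \maxNumCuts |K|)$ as prescribed in \Cref{alg:randomizedRounding} simplifies this bound to $1/n$, so feasibility holds with probability at least $1 - 1/n$, which is the computation already displayed in \eqref{eq:probFeas}.

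For the \textbf{expected approximation ratio}, I would argue via linearity of expectation. The LP value $\sum_{e \in E} w(e) x_e^*$ lower-bounds $\opt$, since the indicator vector of any feasible spanner together with induced unit flows in $\extension{G}$ is a feasible integer solution to \MCF. Each edge $e$ is placed in $E'$ with probability $\min\{1, \gamma x_e^*\} \leq \gamma x_e^*$, so $\mathbb{E}[w(E')] \leq \gamma \sum_{e \in E} w(e) x_e^* \leq \gamma \cdot \opt$. Substituting $\gamma = \ln(n \maxNumCuts |K|)$ and bounding $\maxNumCuts \leq (\maxDemand+2)^{n-2}$ and $|K| \leq n^2$ gives the claimed $\mathcal{O}(n \log \maxDemand) \subseteq \mathcal{O}(n \log n)$ ratio.

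For the \textbf{running time}, the dominant cost is solving \MCF once; the subsequent rounding runs in $\mathcal{O}(m)$ time. A count of variables, constraints, and coefficient encoding-length of \MCF as given in the paragraph following the LP yields the bound $\mathcal{O}(\myLP)$, which is polynomial in $n$ whenever $\maxDemand \in \mathcal{O}(\poly(n))$. The main conceptual obstacle has already been resolved by the preceding lemmas, namely the proof that ascending cuts fully characterise feasibility in $\extension{H}$, so the theorem proof itself should reduce to a clean one-paragraph assembly: apply the union bound of \eqref{eq:probFeas} for feasibility, use linearity of expectation against the LP relaxation for the weight, and invoke the LP-size bounds for the running time.
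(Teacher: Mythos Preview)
Your proposal is correct and follows essentially the same route as the paper: the paper's proof of \Cref{thm:RandomizedRounding} is precisely the assembly you describe, combining \Cref{lem:lambdaCutEquivalence}, the ascending-cut observation, \Cref{lem:cutSatisfyProbability}, and the union bound \eqref{eq:probFeas} for feasibility, together with the LP-relaxation lower bound and linearity of expectation for the weight guarantee. The only addition worth noting is that the paper also explicitly records the running-time bound $\mathcal{O}(\myLP)$ and its simplification via $\maxDemand \leq n\maxLength$, which you already anticipate.
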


If the instance is undirected, we construct the $\maxDemand$-extension of its bi-directed counterpart but only have undirected edge variables in \eqref{lp:ours:arcVars}; constraints \eqref{lp:ours:edgeCons} are constructed twice, once per possible direction of $e$.
The remaining constructions and proofs then work identically. %

\subparagraph{Comparison to Grigorescu et al.~\cite{grigorescu2023approximation}.}
It is worthwhile to compare our algorithm to the currently theoretically strongest known approach~\cite{grigorescu2023approximation} for this setting. Amongst other steps, the latter requires the computation of up to $\GrigNumTrees$ minimum-weight distance-preserving junction trees. Each such tree requires solving an LP with orders of magnitude more variables and constraints; namely, each such computation requires $\Omega(\GrigLP)$ time.
Clearly, these computations alone, even disregarding all other necessary subroutines (solving Path-LPs, etc.), drastically exceed our time guarantee.
Also, neglecting all additional (expected) weight for covering the remaining terminal pairs, the weight introduced by their constrained shortest path trees is lower bounded by $12 \cdot n^{4/5} \ln n \cdot \opt$. 
Compare this to our expected ratio of $\gamma = \ln(n \maxNumCuts |K|) \leq \ln(n^3 (\maxDemand + 2)^{n-2})$.
For, e.g., instances with 
$\maxDemand \leq n$, $\sqrt{n}$, $\log n$, or $10$, our algorithm expects a superior approximation ratio for all $n \leq 248\, 000$, $10^6$, $10^{9}$, and $10^{10}$, respectively. %
Clearly, this encompasses many practically relevant instances.

\subparagraph{Constant-bounded instances.}
Consider any instance in the scenario of~\Cref{thm:RandomizedRounding}; we say it is \emph{constant-bounded} if the maximum distance demand $\maxDemand$ is bounded by some constant.
\begin{corollary}\label{cor:RandomizedRoundingConstantDelta}
    \Cref{alg:randomizedRounding} is an $\mathcal{O}(n)$-approximation for constant-bounded instances.
\end{corollary}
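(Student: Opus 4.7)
The plan is to obtain this as a direct substitution into the bound from \Cref{thm:RandomizedRounding}, so no new algorithmic ideas are needed; I only need to verify that each factor in $\gamma = \ln(n \maxNumCuts |K|)$ collapses appropriately when $\maxDemand$ is a constant.

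First, I would recall from \Cref{thm:RandomizedRounding} that \Cref{alg:randomizedRounding} returns a feasible spanner w.h.p.\ with expected approximation ratio exactly $\gamma = \ln(n \maxNumCuts |K|)$, where $\maxNumCuts \leq (\maxDemand + 2)^{n-2}$. The feasibility argument needs no adaptation, since it holds for all instances in the scope of \Cref{thm:RandomizedRounding}; only the numerical bound on $\gamma$ must be reconsidered.

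Next, I would bound each factor of the argument of the logarithm. Since $K \subseteq V \times V$ we trivially have $|K| \leq n^2$, so $\ln(n |K|) \leq 3 \ln n \in \mathcal{O}(\log n)$. For the dominant term $\ln \maxNumCuts$, assuming $\maxDemand \leq c$ for some constant $c$, we obtain
\[
\ln \maxNumCuts \leq (n-2)\ln(\maxDemand + 2) \leq (n-2)\ln(c+2) \in \mathcal{O}(n),
\]
where the hidden constant depends only on $c$. Adding the two contributions gives $\gamma \in \mathcal{O}(n) + \mathcal{O}(\log n) = \mathcal{O}(n)$, which is the claimed ratio.

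There is no real obstacle here; the only thing to double-check is that the assumption $\maxDemand \in \mathcal{O}(1)$ does not interfere with the polynomial running time or the feasibility guarantee of \Cref{thm:RandomizedRounding}. Both still apply, as constant $\maxDemand$ is certainly polynomially bounded in $n$, so \MCF has polynomially many variables and constraints, and the cut-based feasibility proof is unchanged. Hence the corollary follows immediately.
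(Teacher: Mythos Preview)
Your proposal is correct and matches the paper's own reasoning: the corollary is stated without proof as an immediate specialization of \Cref{thm:RandomizedRounding}, obtained precisely by substituting constant $\maxDemand$ into $\gamma = \ln(n\maxNumCuts|K|) \in \mathcal{O}(n\log\maxDemand)$.
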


Currently, among constant-bounded spanner problems,
only undirected unit-length multiplicative $2$-spanners (where thus $\delta(u,v)=2$ for all $\{u,v\}\in K=E$) are known to allow approximation ratios in $\mathcal{O}(\log n)$~\cite{kortsarz1994, kortsarz2001}.
We extend this result to all constant-bounded \generic \freeform spanner problems with constant maximum (out)degree $\maxDeg$, even on directed graphs.
Recall that under these restriction, even the undirected basic multiplicative $\alpha$-spanner problem is \textbf{NP}-hard~\cite{gomez2023}. Further, without restricting the (out)degrees, there cannot be a polylogarithmic approximation ratio unless \textbf{NP}$\subseteq$\textbf{BPTIME}$(n^{\text{polylog}(n)})$~\cite{dinitz2015}.

For $(u,v) \in K$ and $G=(V,E)$, let subgraph $\myRestriction{G}=(\myRestriction{V}, \myRestriction{E})$, with $\myRestriction{V}\subseteq V$ and $\myRestriction{E}\subseteq E$, be induced by all $uv$-paths $P$ in $G$ with $\ell(P) \leq \demand(u,v)$.
Then $\myRestriction{G}$ contains all nodes and edges that affect whether terminal pair $(u,v)$ satisfies its distance demand.

\begin{lemma}\label{lem:restrictedGCuts}
    Every ascending $u_0v_{\demand(u,v)}$-cut in $\mathcal{C}_{uv}$ in $\extension{H}$ is satisfied if and only if every ascending $u_0v_{\demand(u,v)}$-cut in $\mathcal{C}_{uv}'$ in $\extension{\myRestriction{H}}$ is satisfied.
\end{lemma}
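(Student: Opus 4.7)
The plan is to reduce both sides of the equivalence to a statement about short $uv$-paths. By \Cref{lem:lambdaCutEquivalence}, applied to $H$, we know that every $u_0v_{\demand(u,v)}$-cut in $\extension{H}$ is satisfied iff $d_H^\ell(u,v)\leq\demand(u,v)$. Moreover, every non-ascending cut is trivially satisfied by a self-arc, so for any subgraph the statement \emph{every ascending cut is satisfied} coincides with \emph{every cut is satisfied}. Hence it will suffice to show
\[
d_H^\ell(u,v)\leq\demand(u,v) \;\Longleftrightarrow\; d_{\myRestriction{H}}^\ell(u,v)\leq\demand(u,v),
\]
after which \Cref{lem:lambdaCutEquivalence} applied to both $H$ and $\myRestriction{H}$ gives the claim.

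The right-to-left direction is immediate since $\myRestriction{H}\subseteq H$, so any $uv$-path in $\myRestriction{H}$ is also a $uv$-path in $H$ of the same length. For the left-to-right direction, suppose $d_H^\ell(u,v)\leq\demand(u,v)$ and let $P$ be a shortest $uv$-path in $H$ witnessing this. Then $\ell(P)\leq\demand(u,v)$, so by definition of $\myRestriction{G}$ every node and edge of $P$ lies in $\myRestriction{G}$, and thus in $\myRestriction{H}$. This yields $d_{\myRestriction{H}}^\ell(u,v)\leq \ell(P)=d_H^\ell(u,v)\leq \demand(u,v)$.

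Combining these two equivalences gives the desired statement. The only subtlety is that \Cref{lem:lambdaCutEquivalence} was stated for subgraphs $H\subseteq G$ with vertex set $V$, whereas $\myRestriction{H}$ may have fewer vertices. I would handle this by either viewing $\myRestriction{H}$ as the spanning subgraph $(V,\myRestriction{E})$ (which does not change any $uv$-path or any ascending cut relevant to the argument, since isolated nodes contribute only self-arcs in the extension), or by observing that $\extension{\myRestriction{H}}$ embeds into $\extension{H}$ in such a way that ascending $u_0v_{\demand(u,v)}$-cuts correspond under restriction/extension of the labeling $\myLabeling$ from \Cref{def:ascendingCut}. This bookkeeping is the only mildly delicate step; the core argument is a straightforward path-length equivalence plus two applications of \Cref{lem:lambdaCutEquivalence}.
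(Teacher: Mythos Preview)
Your argument is correct and, in fact, cleaner than what the paper does. You reduce both sides to the path-length condition $d^\ell_{\bullet}(u,v)\le\demand(u,v)$ via two applications of \Cref{lem:lambdaCutEquivalence} together with the observation that non-ascending cuts are always satisfied; the equivalence $d^\ell_H(u,v)\le\demand(u,v)\Leftrightarrow d^\ell_{\myRestriction{H}}(u,v)\le\demand(u,v)$ then follows immediately from the definition of $\myRestriction{G}$. Your handling of the vertex-set mismatch (padding $\myRestriction{H}$ with isolated vertices, which only contribute self-arcs and hence cannot affect satisfaction of any ascending cut) is sound.

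The paper instead argues directly at the level of cuts: it identifies an ascending cut in $\extension{H}$ with its restriction to $\extension{\myRestriction{H}}$ (both being determined by a labeling $\myLabeling$, restricted or extended between $V$ and $\myRestriction{V}$) and observes that satisfaction is preserved under this correspondence. That avoids a second invocation of \Cref{lem:lambdaCutEquivalence}, but the paper's two-sentence write-up is terse to the point of being hard to parse (the claimed inclusion ``$\mathcal{C}_{uv}\subseteq\mathcal{C}_{uv}'$'' is really a restriction map, not a literal set inclusion). Your path-based reduction makes the logic transparent and also makes explicit \emph{why} the restriction to $\myRestriction{G}$ loses nothing---namely, because any short $uv$-path already lives there by definition---which is the conceptual content of the lemma.
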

\begin{proof}
    $\mathcal{C}_{uv} \subseteq \mathcal{C}_{uv}'$ gives necessity.
    For sufficiency, note that no satisfied cut in $\mathcal{C}_{uv}'$ induced by a labeling $\myLabeling$ of nodes $\myRestriction{V}$ can be dissatisfied by expanding $\myLabeling$ to label further nodes. %
\end{proof}

By \Cref{lem:restrictedGCuts}, we only have to choose $\gamma$ such that all ascending $u_0v_{\demand(u,v)}$-cuts in $\extension{\myRestriction{H}}$ are satisfied w.h.p.
Note that $n_{uv} \coloneqq |\myRestriction{V}| \leq \maxDeg^{\demand(u,v)}$, for every $(u,v) \in K$.
Hence, the number of relevant ascending cuts $|\mathcal{C}_{uv}'|$ in $\extension{\myRestriction{H}}$ is bounded by $|\mathcal{C}_{uv}'| \leq (\demand(u,v)+2)^{n_{uv}-2}$.
Let $\maxNumCuts' \coloneqq \max_{(u,v) \in K} |\mathcal{C}_{uv}'|$ the maximum number of relevant cuts for any $(u,v) \in K$.
Then, $\gamma = \ln(n \maxNumCuts' |K|) \in \mathcal{O}(\log n + \maxDeg^{\maxDemand} \log \maxDemand)$ suffices to guarantee 
feasible solutions w.h.p. 
Hence, the expected approximation ratio improves with decreasing $\maxDeg^{\maxDemand}$ and, 
e.g., beats the best known ratio for unit-length multiplicative $3$-spanners of $\widetilde{\mathcal{O}}(n^{1/2})$~\cite{dinitz2011directed} if $\Delta \in o(n^{1/6})$.
Crucially, for constant $\maxDemand$ and constant $\maxDeg$, choosing $\gamma = \ln(n \maxNumCuts'|K|) \in \mathcal{O}(\log(n |K|)) \subseteq \mathcal{O}(\log n)$ suffices.

\begin{corollary}\label{cor:RandomizedRoundingConstDR}
     \Cref{alg:randomizedRounding} is an $\mathcal{O}(\log n)$-approximation for constant-bounded instances with constant maximum (out)degree.
\end{corollary}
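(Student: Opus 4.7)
The plan is to directly specialize the refined analysis that immediately precedes the corollary statement, using the two constant bounds on $\maxDemand$ and $\maxDeg$ to drive the cut count into a constant.

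First, I would invoke \Cref{lem:restrictedGCuts}, which tells us that for each $(u,v) \in K$ we need only worry about ascending $u_0 v_{\demand(u,v)}$-cuts inside the $\maxDemand$-extension of the \emph{restricted} subgraph $\myRestriction{H}$, rather than all of $\extension{H}$. This restriction is what lets us replace the global $n$ inside the cut count by a much smaller quantity.

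Next, I would bound $n_{uv} = |\myRestriction{V}|$ by a BFS-style argument: every node in $\myRestriction{V}$ lies on a $uv$-path of length at most $\demand(u,v)$, so it can be reached from $u$ by at most $\demand(u,v)$ edges, each of whose heads come from at most $\maxDeg$ many successors. Hence $n_{uv} \leq \maxDeg^{\demand(u,v)} \leq \maxDeg^{\maxDemand}$, which is a constant by assumption. Plugging into the bound $|\mathcal{C}_{uv}'| \leq (\demand(u,v)+2)^{n_{uv}-2}$ derived in the preceding discussion gives $\maxNumCuts' \leq (\maxDemand+2)^{\maxDeg^{\maxDemand}-2} \in \mathcal{O}(1)$.

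Finally, I would choose $\gamma = \ln(n\maxNumCuts'|K|)$ exactly as suggested before the corollary. Using $|K| \leq n^2$ and $\maxNumCuts' \in \mathcal{O}(1)$, this yields $\gamma \in \mathcal{O}(\log n)$. The feasibility-with-high-probability and expected-weight guarantees then follow verbatim from the union-bound calculation in \eqref{eq:probFeas}, carried out over cuts in $\mathcal{C}_{uv}'$ instead of $\mathcal{C}_{uv}$ and justified by \Cref{lem:restrictedGCuts}: the expected weight of the rounded spanner is at most $\gamma \cdot \opt$, and $H$ is infeasible with probability at most $1/n$. There is no real obstacle here beyond carefully tracking that both constants appear only in the constant factor hidden in $\mathcal{O}(\log n)$ and that \Cref{lem:restrictedGCuts} legitimately lets us swap $\maxNumCuts$ for $\maxNumCuts'$ in the union bound.
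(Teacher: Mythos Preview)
Your proposal is correct and follows essentially the same route as the paper: invoke \Cref{lem:restrictedGCuts} to restrict attention to cuts in $\extension{\myRestriction{H}}$, bound $n_{uv}\le\maxDeg^{\demand(u,v)}$ so that $\maxNumCuts'$ is constant when both $\maxDemand$ and $\maxDeg$ are, and then set $\gamma=\ln(n\maxNumCuts'|K|)\in\mathcal{O}(\log n)$ and reuse the union-bound calculation \eqref{eq:probFeas}. No substantive differences.
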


Choosing $\gamma = \ln(2 \maxNumCuts' |K|)$ still yields feasible solutions with probability at least 
$1- \frac{1}{2} = \frac{1}{2}$ (see \eqref{eq:probFeas}), but no longer w.h.p.
Then, for constant-bounded instances with constant maximum (out)degree, if the number of terminal pairs $|K|$ is sublinear in $n$ or constant, the so-modified \Cref{alg:randomizedRounding} achieves expected weight ratios in $o(\log n)$ or $\mathcal{O}(1)$, respectively.

\section{Conclusion}
We provided two simple approximations for \generic \freeform spanner problems (cf.~\Cref{tab:myResults,tab:AGResults}). %
Our \AG yields the first unconditional $m$-approximation for this problem.
At the same time, it retains all weight and size guarantees \GR is appreciated for. %
For integer lengths and polynomially bounded distance demands, our \RR has ratio $\mathcal{O}(n \log n)$, closely matching the best known ratio under these conditions~\cite{grigorescu2023approximation}. %
In contrast to the latter very complex %
approach, we simply round a solution of a standard multicommodity flow LP in a graph extension.
On bounded-degree graphs, this furthermore provides the first $\mathcal{O}(\log n)$-ratio for (undirected or directed) spanner problems with constant-bounded distance demands (beyond undirected unit-length $2$-spanners).

Future work could include practical studies.
Such evaluations would have to focus on algorithms tuned for simplicity and implementability, as
our proposed algorithms.
Also, comparisons against the \enquote{defacto state-of-the-art} of forcing \generic instances into a \uniform format would be interesting.
Implementations of algorithms like~\cite{grigorescu2023approximation} seem unrealistic, due to their high complexity and reliance on never implemented subroutines.

\clearpage

\bibliography{bibliography.bib}

\clearpage
\appendix

\section*{APPENDIX}

\section{Error in~\cite{dodis1999}}\label{appendix:Dodis}
In~\cite{dodis1999}, the authors give a (sound) $\mathcal{O}(n \log \alpha)$-approximation for directed unit-length multiplicative $\alpha$-spanners.
In the end, they briefly (and unfortunately wrongly) suggest how to generalize this algorithm to the \generic setting with polynomially bounded lengths.
They subdivide arcs and use the unit-length algorithm to solve the transformed instance.
However, we show that this approach does not yield feasible solutions whenever there is an $e \in E$ with $\ell(e)>\alpha$. \Cref{fig:DK_counterexample} showcases this.
The intuitive downfall is that the unit-length algorithm assumes distance demands for all arcs.
This, however, is not given by their transformation.

\begin{figure}[h]
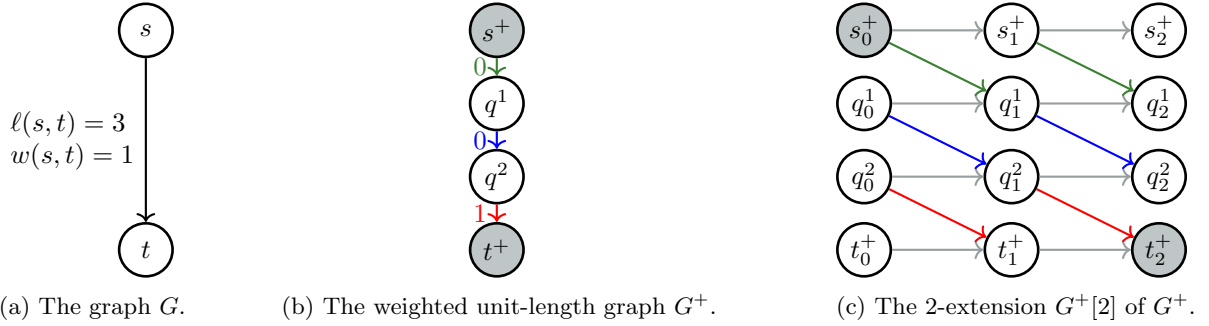

    \centering
        \begin{subfigure}{0.2\textwidth}
        \centering
        \includeTikzGraph{GraphG}
        \subcaption{The graph $G$.}
        \label{fig:DK_counterexampleA}
    \end{subfigure}
    \hfill
    \begin{subfigure}{0.37\textwidth}
        \centering
        \includeTikzGraph{GraphGPlus}
        \subcaption{The weighted unit-length graph $G^+$.}
        \label{fig:DK_counterexampleB}
    \end{subfigure}
    \hfill
    \begin{subfigure}{0.37\textwidth}
        \centering
        \includeTikzGraph{LayeredGraphGPlus}
        \subcaption{The $2$-extension $G^+[2]$ of $G^+$.}
        \label{fig:DK_counterexampleC}
    \end{subfigure}
    \caption{A simple counterexample. The model is infeasible as no flow can be send from $s^+_0$ to $t^+_2$.}
    \label{fig:DK_counterexample}
\end{figure}

Consider a \generic instance $\mathcal{I}$ with graph $G=(V,E)$, weights $w(e)\geq 0$, and polynomially bound integer lengths $\ell(e) > 0$ for all $e \in E$.
First, the algorithm transforms $\mathcal{I}$ into a unit-length instance $\mathcal{I}^+$ with graph $G^+=(V^+, E^+)$, weights $w^+(e)\geq 0$, and lengths $\ell^+(e) = 1$ for all $e \in E^+$.
For every $e=(s,t) \in E$, $G^+$ contains a path of $\ell(e)$-many arcs $(s^+=q^0,q^{1}), (q^{1}, q^{2}),\dots, (q^{\ell(e)-1}, q^{\ell(e)}=t_+)$, with $s^+$ and $t^+$ corresponding to $s$ and $t$.
It sets $w^+(q^i, q^{i+1}) = 0$ for all $0 \leq i < \ell(e)-1$, $w^+(q^{\ell(e)-1}, t^+)=w(e)$, and $\ell^+(e)=1$ for all $e \in E^+$.
Distance demands are enforced for all node pairs corresponding to nodes that are adjacent in $G$.
That is $K^+=\{ (s^+,t^+) \subseteq V^+ \times V^+ \mid (s,t) \in E\}$.

\Cref{fig:DK_counterexampleA} shows an instance of the directed \generic multiplicative $2$-spanner problem with $G=(V,E)$, $V=\{s,t\}$, $E=\{(s,t)\}$, $w(s,t)=1$, and $\ell(s,t)=3$.
Clearly, $H=(V,E'=\{(s,t)\})$ forms a feasible solution. 
\Cref{fig:DK_counterexampleB} shows the directed unit-length graph $G^+=(V^+, E^+)$ constructed from $G$, with $K^+=\{(s^+, t^+)\}$ marked grey.

Next, the algorithm solves the newly created instance $\mathcal{I}^+$ with their previously proposed unit-length $\alpha$-spanner approximation.
It constructs an $(\alpha+1)$-layered directed graph $G^+[\alpha]=(V^+[\alpha],E^+[\alpha])$ (called the $\alpha$-extension) derived from $G^+$.
$G^+[\alpha]$ has $\alpha + 1$ layers of node sets $V_0^+, \dots, V_\alpha^+$ where each $V_i^+$ is a copy of $V^+$.
For $q \in V^+$, $q_i$ denotes the copy of $q$ in $V_i^+$.
For all $(x,y) \in E^+$, there are arcs $(x_i, y_{i+1})$, for all $0 \leq i < \alpha$.
Further, there are \emph{self-arcs} $(q_i, q_{i+1})$ for all $q \in V^+$ and $0 \leq i < \alpha$.
By traversing any arc, one ascends exactly one layer.
The algorithm solves the LP relaxation of a multicommodity flow ILP.
For every $(u, v) \in K^+$, the ILP sends one unit of flow through $G^+[\alpha]$ from $u_0$ to $v_\alpha$.
The spanner is then obtained via randomized rounding of the fractional solution.
\Cref{fig:DK_counterexampleC} shows the $2$-extension $G^+[2]$ of $G^+$ in our example.
The algorithm must send one unit of flow from $s^+_0$ to $t^+_2$ through $G^+[2]$.
However, due to the subdivision of $(s,t)$, no such flow exists, as there is no directed path from $s^+_0$ to $t^+_2$ in $G^+[2]$.
Consequently, the algorithm cannot yield a feasible solution.
\Cref{obs:flow_length} formalizes this.

\begin{observation}\label{obs:flow_length}
    The generalized algorithm for \generic multiplicative $\alpha$-spanners~\cite[Section 5.2]{dodis1999} is infeasible if any $(s,t) \in E$ has $\ell(s,t)>\alpha$.
\end{observation}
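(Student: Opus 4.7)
The plan is to establish the observation by showing infeasibility directly on the counterexample from \Cref{fig:DK_counterexample}, using a clean layer-counting argument on the $\alpha$-extension. The structural property we need is that every arc in $G^+[\alpha]$ raises the layer index by exactly $1$: non-self arcs correspond to arcs in $G^+$ while self-arcs keep the underlying vertex in $G^+$ fixed. Consequently, any directed path from $s^+_0$ to $t^+_\alpha$ in $G^+[\alpha]$ consists of exactly $\alpha$ arcs, and if $k$ of them are non-self arcs, projecting the path to $G^+$ (by dropping self-arcs) yields a walk of length $k \leq \alpha$ from $s^+$ to $t^+$.

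Next, I would observe that in the counterexample instance of \Cref{fig:DK_counterexampleA} with $\alpha = 2$ and $\ell(s,t)=3$, the only $s^+ t^+$-walk in $G^+$ goes through the subdivision path of length $\ell(s,t) = 3$. Hence $d_{G^+}^{\ell^+}(s^+,t^+) = 3 > 2 = \alpha$. Combined with the previous step, this means any $s^+_0 t^+_\alpha$-path would require $k \geq 3$ non-self arcs while simultaneously $k \leq \alpha = 2$, a contradiction. Therefore no $s^+_0 t^+_\alpha$-path exists in $G^+[2]$.

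Since $(s^+,t^+) \in K^+$, the multicommodity flow LP (and its integral counterpart) requires routing one unit of flow between these nodes, which is impossible given no directed path exists. The LP is thus infeasible; the subsequent randomized rounding has no fractional solution to round, so the algorithm cannot return a feasible spanner. To extend beyond the specific instance, I would note that the argument only uses $\ell(s,t) > \alpha$ together with the fact that no alternative $s^+ t^+$-walk of length at most $\alpha$ exists in $G^+$; whenever the edge $(s,t)$ lies on (or is) a shortest $s$-$t$ path in $G$, this condition is met in the transformed instance.

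The argument is essentially bookkeeping, so there is no real obstacle; the only point requiring care is the precise correspondence between layer increments and walks in $G^+$, i.e., that self-arcs are the only mechanism for ascending a layer without moving in $G^+$, so the non-self-arc count in a layered path equals the $G^+$-length of its projection. This makes the reachability condition $d_{G^+}^{\ell^+}(s^+,t^+) \leq \alpha$ both necessary and sufficient for the LP to be feasible for the terminal pair $(s^+,t^+)$, and the observation is an immediate consequence.
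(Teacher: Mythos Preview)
Your proposal is correct and follows essentially the same approach as the paper: both argue infeasibility via the counterexample of \Cref{fig:DK_counterexample} by observing that no directed $s^+_0 t^+_\alpha$-path exists in $G^+[\alpha]$, whence the LP cannot route the required unit of flow. Your layer-counting argument (each arc raises the layer by exactly one, so the projection to $G^+$ has length at most $\alpha$) makes explicit what the paper leaves implicit, and your caveat about needing $(s,t)$ to lie on a shortest $st$-path for the general claim is a fair observation that the paper does not address.
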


\section{Proof of \Cref{cor:AdaptedGreedy+beta}: Augmented Greedy on unit-length $+\beta$-spanners}\label{appendix:AdaptedGreedy}
We show that \AG yields undirected unit-length additive $+\beta$-spanners with size and weight ratio $\mathcal{O}(n^{3/2})$, for all constant integer $\beta \geq 2$.
For $\beta=2$, the size guarantee matches the lower bound~\cite{woodruff2006}.
Knudsen~\cite{knudsen2014} shows that \GR\footnote{The proof even works regardless of the order in which terminal pairs are processed.} yields basic additive $+2$-spanners with size $\mathcal{O}(n^{3/2})$.

We generalize the proof in two ways:
Firstly, we show that \GR yields additive $+\beta$-spanners of size $\mathcal{O}(n^{3/2})$, for all constant integer $\beta \geq 2$.
Secondly, we show that the $uv$-paths added for terminal pairs $\{u,v\} \in K$ during construction, do not necessarily have to be shortest paths in order to achieve the size guarantee of $\mathcal{O}(n^{3/2})$.
It suffices to add paths of length at most $\demand(u,v)=\dist{G}{\ell}(u,v) + \beta$ for $\{u,v\} \in K$.
Consequently, \AG achieves the aforementioned guarantees for unit-length additive $+\beta$-spanner, for all constant integer $\beta \geq 2$.
Interestingly, our proof still works analogous to~\cite{knudsen2014}. %

\begin{proof}[Proof of \Cref{cor:AdaptedGreedy+beta}]
For $v \in V$, $\myDeg_G(v)$ denotes the degree of $v$ in graph $G$.
For a spanner $H=(V,E')$ of $G=(V,E)$, define
\[ v(H)= \sum_{u,v \in V} \max \{0, d_G^\ell(u,v) - d_H^\ell(u,v) + \newStuff{(\beta +3)}\} \quad \text{ and } \quad c(H)= \sum_{v \in V} (\myDeg_H(v))^2.\]
Clearly, $0 \leq v(H) \leq \newStuff{(\beta +3)}n^2$ and by Cauchy Schwartz's inequality $(c(H) n)^{1/2} \geq 2 |E'|$.
The goal is to show that upon termination of \AG $c(H) \in \mathcal{O}(n^2)$, since this implies that the size $|E'|$, and consequently the weight ratio, is in $\mathcal{O}(n^{3/2})$.
We do this by proving that in each step of Phase 2 $c(H) - 12 v(H)$ does not increase.
Then, since $v(H) \in \mathcal{O}(n^2)$ this implies $c(H) \in \mathcal{O}(n^2)$, concluding the proof.

Consider a step in Phase 2 where $(u,v) \in K$ is processed. Let $H_0$ be the current spanner.
Assume $(u,v)$ violates the distance demand, i.e.,  $\distL{H_0}(u,v) > \demand(u,v)=\distL{G}(u,v) + \newStuff{\beta}$, as otherwise nothing happens.
Then, new edges are added to $H_0$ on a shortest $uv$-path $P$ in \newStuff{$\lowG$}. 
Let $H=H_0 \cup P$ the spanner after adding $P$. Let $P$ consist of nodes $u=w_0, w_1, \dots, w_{t-1}, w_t=v$ and $|P|=t$.
As the degree of any $w_i$ increases by at most $2$ in $H$, we get
\begin{gather}
    c(H)-c(H_0) \leq \sum_{i=0}^{t} \left((\myDeg_H(w_i))^2 - (\myDeg_H(w_i)-2)^2 \right)\leq 4 \sum_{i=0}^{t} \myDeg_H(w_i). \label{eq:cH-cH0}
\end{gather}
No node can be adjacent (in \newStuff{$\lowG$}) to more than $3$ nodes on $P$, since otherwise $P$ would not be a shortest $uv$-path in \newStuff{$\lowG$}.
Let $A$ be the set of nodes adjacent in $H$ to a node of $P$. Note that all nodes on $P$ are in $A$.
Since $E' \subseteq \newStuff{\lowE}$, we can bound $|A| \geq \frac{1}{3} \sum_{i=0}^{t} \myDeg_H(w_i)$.
Recall that $|P|=\distL{\newStuff{\lowG}}(u,v) \leq \demand(u,v) = \dist{G}{\ell}(u,v) +\newStuff{\beta}$ for all $\{u,v\} \in K$. 
Now, let $z \in A$.
Clearly, $\dist{H}{\ell}(u,z)+\dist{H}{\ell}(z,v) \leq |P|+2 = \dist{\newStuff{\lowG}}{\ell}(u,v) + 2 \leq \dist{G}{\ell}(u,v)+\newStuff{\beta+2}$.
Since the distance demand of $(u,v)$ is violated in $H_0$, we have $\distL{H_0}(u,z) + \distL{H_0}(z,v) > \demand(u,v) + 2 = \distL{G}(u,v) + \newStuff{\beta + 2}$.
Hence, 
\begin{gather}
    \distL{H}(u,z)+\distL{H}(z,v) < \distL{H_0}(u,z) + \distL{H_0}(z,v) \label{eq:H<H0}
\end{gather}
with either $\distL{H}(u,z) < \distL{H_0}(u,z)$ or $\distL{H}(z,v) < \distL{H_0}(z,v)$ or both.
Now, let $w_i$ be adjacent to~$z$ \newStuff{and recall that $\{u,z\}, \{z,v\} \in K = \binom{V}{2}$ with $\demand(u,z) = \distL{G}(u,z) + \beta$ and $\demand(z,v) = \distL{G}(z,v) + \beta$, respectively}. Then
\begin{align} 
    \distL{H}(u,z) &\leq \distL{H}(u,w_i) + \distL{H}(w_i, z) = \distL{\newStuff{\lowG}}(u,w_i) +1 \notag\\
                   &\leq \distL{\newStuff{\lowG}}(u,z) + \distL{\newStuff{\lowG}}(z,w_i) + 1 = \distL{\newStuff{\lowG}}(u,z) + 2 \leq \distL{G}(u,z) + \newStuff{\beta + 2}. \label{eq:H<G+4}
\end{align}
And analogously $\distL{H}(z,v) \leq \distL{G}(z,v) + \newStuff{\beta +2}$. 
The key observation is that the differences between $v(H) - v(H_0)$ %
can be based on nodes $z \in A$, as only they are affected by the inclusion of $P$ to $H_0$.
By \eqref{eq:H<H0}, we know that for every $z \in A$ the distance to either $u$ or $v$ must be strictly shorter in $H$ than it is in $H_0$.
Combined with \eqref{eq:H<G+4} yields
\begin{gather}
    \sum_{w \in V}  \max \{0, \distL{G}(z,w) -  \distL{H}(z,w) + \newStuff{\beta + 3} \}> \sum_{w \in V}  \max \{0,  \distL{G}(z,w) -  \distL{H_0}(z,w) + \newStuff{\beta + 3} \} \text{,} \label{eq:diff}
\end{gather}
where only summands $w \in \{u,v\} \subseteq V$ make a real difference.
As \eqref{eq:diff} holds for every $z \in A$, we get $v(H)-v(H_0) \geq \frac{1}{3} \sum_{i=0}^{t} \myDeg_H(w_i)$.
Combining this with \eqref{eq:cH-cH0} concludes the proof as
$(c(H) - 12v(H)) - (c(H_0) - 12v(H_0)) \leq 0$.
\end{proof}

\end{document}